\newtheorem{definition}{Definition}
\newtheorem{theorem}{Theorem}
\newtheorem{assumption}{Assumption}
\newtheorem{claim}{Claim}
\newtheorem{lemma}{Lemma}
\newtheorem{corollary}{Corollary}
\providecommand{\mnorm}[1]{\ensuremath{\left\lvert#1\right\rvert}}
\def\H{\mathcal{H}}
\def\G{\mathcal{G}}
\def\V{\mathcal{V}}
\begin{document}

\title{Byzantine fault-tolerant distributed set intersection with redundancy }
\author{Shuo Liu \hspace{0.5in} Nitin H. Vaidya\\\small{Georgetown University}\\\small{\texttt{\{sl1539, nitin.vaidya\}@georgetown.edu}}}
\date{}

\maketitle

\begin{abstract}
    In this report, we study the problem of Byzantine fault-tolerant distributed set intersection and the importance of redundancy in solving this problem. Specifically, consider a distributed system with $n$ agents, each of which has a local set. There are up to $f$ agents that are Byzantine faulty. The goal is to find the intersection of the sets of the non-faulty agents. 
    
    We derive the Byzantine set intersection problem from the Byzantine optimization problem. We present the definition of $2f$-redundancy, and identify the necessary and sufficient condition if the Byzantine set intersection problem can be solved if a certain redundancy property is satisfied, and then present an equivalent condition. We further extend our results to arbitrary communication graphs in a decentralized setting. Finally, we present solvability results for the Byzantine optimization problem, inspired by our findings on Byzantine set intersection. The results we provide are for synchronous and asynchronous systems both.
\end{abstract}

\section{Introduction}
\label{sec:intro}

Consider a distributed system of $n$ agents, where each agent $i$ has a local set $X_i$. The goal of the Byzantine fault-tolerant distributed set intersection problem for the non-faulty agents is to collectively compute the intersection of the $X_i$'s for all non-faulty agents $i$'s despite the presence of up to $f$ faulty agents, i.e., to find
\begin{align}
    \bigcap_{i\in\V\backslash F}X_i
    \label{eqn:goal-set-intersection}
\end{align}
where $\V$ is the set of all agents, and $F$ is the set of Byzantine faulty agents with $\mnorm{F}\leq f$. A Byzantine faulty agent may behave arbitrarily, differing from the prescribed algorithm. The Byzantine faulty agents may also work in a collaborative effort to disrupt the algorithm. 
We call an algorithm that achieves this goal $f$-resilient, or formally,
\begin{definition}[$f$-resilient]
    \label{def:f-resilient-set-intersection}
    We say an algorithm for Byzantine set intersection is \emph{$f$-resilient} if it outputs the intersection of the sets of all non-faulty agents, defined in \eqref{eqn:goal-set-intersection}, in the presence of up to $f$ Byzantine agents.
\end{definition}

There are two commonly seen models for the architecture of a distributed system. A \textit{centralized} system consists of a (trusted) server and $n$ agents. Communications happen only between the server and each agent, but not between two agents. On the other hand, a \textit{decentralized} system consists of only $n$ agents, and communications happen between pairs of agents directly. In this case, a directed graph $G(\V,\mathcal{E})$ can be used to describe the communication graph, where $\V$ is the set of agents, and $\mathcal{E}$ is the set of directed edges where $(i,j)\in\mathcal{E}$ if agent $i$ can send messages to agent $j$. Both structures will be discussed later in this report. We also note that throughout this paper we only discuss synchronous algorithms, unless specifically stated otherwise.

As we will elaborate in detail in Section~\ref{sec:optimization}, the problem of Byzantine distributed set intersection is related to the problem of Byzantine distributed optimization. Specifically, prior work has shown that the necessary condition for solving a Byzantine optimization problem exactly is to satisfy \textit{$2f$-redundancy} (defined later) \cite{gupta2020fault}, which indicates that the minimum point sets of non-faulty agents intersect, and the minimum point set of the aggregate cost functions of non-faulty agents is the same as the intersection of the minimum point set of these agents.

In this report, we first demonstrate the relationship between Byzantine set intersection and Byzantine optimization problems in Section~\ref{sec:optimization}. In Section~\ref{sec:set-inter}, we then propose a similar redundancy property and present the condition on the graph under which an $f$-resilient set intersection algorithm with certain constraints exists when the redundancy property is satisfied. We will also show an equivalent condition in Section~\ref{sec:equiv}. We extend our findings to unconstrained algorithms and asynchronous cases in the remainder of Section~\ref{sec:set-inter}. In Section~\ref{sec:generalized-graph}, we further extend our results to generalized communication graphs. Lastly, we present some necessity and sufficiency results on Byzantine optimization in Section~\ref{sec:to-optimization} inspired by results in the previous sections.
\section{Related work}
\label{eqn:related}


There are previous studies on problems related to the Byzantine distributed set intersection problems. \textit{Certified propagation algorithms} \citep{koo2004broadcast, pelc2005broadcasting, tseng2015broadcast, tseng2019reliable} are proposed for a problem in the presence of $f$-local Byzantine faults, where there is a source agent in a distributed network with an initial input, and that value needs to be transmitted to all other agents, knowing that for every non-faulty agent, there are up to $f$ incoming neighbors are faulty. The set intersection problem can be adapted to a propagation problem: imagine there is a ``virtual'' source agent in a distributed network for a set intersection problem, where for each value $y\notin\bigcap_{i\in\V}X_i$, it needs to propagate a value 0 to all other agents, in other words, all agents should end up with knowing that $y$ is not in $\bigcap_{i\in\V}X_i$. However, the formulation in \citep{tseng2015broadcast} is different from our setting, since in \citep{tseng2015broadcast} the neighbors of the source agent are fixed, while in an adaptation of a set intersection problem, the neighbor of the virtual source agent can be different for different $y\notin\bigcap_{i\in\V}X_i$.

Another line of work studies the \textit{lattice agreement} problem \cite{attiya1995atomic, zheng2018lattice}. On a complete graph connecting $n$ agents, each agent proposes a local value in a \textit{finite join semi-lattice}, or simply lattice. The goal is for each agent to decide on a value in the lattice such that the values of every pair of agents are comparable. The sets in a set intersection problem can be defined as a lattice with subset relation being the partial order and the union operation of sets being the join operation of the lattice. In this way, the problems of lattice agreement and set intersection are similar. However, lattice agreement only requires each agent to decide on \textit{comparable} values, while the set intersection problem requires agreement on the \textit{same} intersection. Furthermore, in general, the set intersection problem we study in this paper does not require a complete communication graph. Also, we explore a special case where the input sets have a certain form of redundancy, introduced in later sections.

In \citep{vaidya2012iterative}, the authors studied a problem of iterative Byzantine consensus in arbitrary directed graphs. Our work is similar in terms of the communication graph we consider. We also use similar analysis techniques in Sections~\ref{sec:equiv} and~\ref{sec:generalized-graph} called \textit{reduced graph} and \textit{source component}. 
\citet{su2016asynchronous} studied the problem of asynchronous distributed hypothesis testing with crash failures. This paper also uses the reduced graph and source component technique. However, in both papers, only one source component in each reduced graph is allowed, while our analysis allows multiple source components.

\citet[Chapter 4.3]{su2020defending} studied Byzantine consensus problem in $m$-dimension with the reduced graph technique, as well as Byzantine non-Bayesian Learning in \citep[Chapter 4.4]{su2020defending}. In both discussions, the creation of a reduced graph is related to the dimension $m$ of the target vector -- each agent may remove up to $mf$ incoming links. This is a much stronger requirement compared to our analysis on set intersection and optimization, where we only remove up to $f$ incoming links to create a reduced graph.
\citet{vyavahare2019distributed} studied distributed learning with adversarial agents using the reduced graph technique. Similar to our analysis, multiple source clusters are allowed in each reduced graph. However, similar to in \cite{su2020defending}, reduced graphs are allowed to remove up to $mf$ incoming links at each node, where $m$ is the dimension.

\citet{mitra2020new} studies distributed hypothesis testing and non-Bayesian learning with Byzantine agents under the decentralized architecture. The set intersection problem is closely related to hypothesis testing, where the fixed true state of the world is the intersection and the observations of the agents are similar to the local sets of each agent. \citep{mitra2020new} provided necessary and sufficient conditions on the structure of the communication graph for a Byzantine resilient algorithm to exist. However, the conditions are defined for a certain joint observation profile, while in our results the conditions are for all possible input sets. Also, \citep{mitra2020new} studied the $f$-local fault model, i.e., there are up to $f$ Byzantine agents among the incoming neighbors of each agent, while we study $f$-global model in this report, where there are up to $f$ Byzantine agents in the system.

\citet{gupta2021byzantine} studied Byzantine optimization problems in a decentralized architecture and examined the importance of $2f$-redundancy in solving Byzantine optimization problems specifically. We discuss $2f$-redundancy and optimization problem on a decentralized architecture in Section~\ref{sec:to-optimization} as well, but \cite{gupta2021byzantine} only examines a fully connected communication graph (or a ``peer-to-peer network''), while our results apply to more generalized communication graphs.
\section{From optimization to set intersection}
\label{sec:optimization}

The problem of Byzantine distributed set intersection is related to the problem of Byzantine distributed optimization. Recall that we defined $\V$ to be the set of all agents, and $F$ to be the set of Byzantine agents in a given execution with $\mnorm{F}\leq f$. In Byzantine optimization \cite{gupta2020fault, yin2018byzantine, chen2017distributed, blanchard2017machine}, each agent $i$ has a local cost function $Q_i(x)$, and the problem asks the non-faulty agents to collaboratively compute an output $\widehat{x}$, such that 
\begin{align}
    \widehat{x}\in \arg\min_x\sum_{i\in\V\backslash F}Q_i(x),
\end{align}
despite the presence of up to $f$ Byzantine faulty agents. Algorithms that achieve this goal are called $f$-resilient \citep{gupta2020fault}, or formally
\begin{definition}[$f$-resilience \cite{gupta2020fault}]
    \label{def:f-resilience}
    A distributed optimization algorithm is $f$-resilient, if it outputs a minimum point of the aggregate cost function of all non-faulty agents, despite the presence of up to $f$ Byzantine faulty agents.
\end{definition}
\citet{gupta2020fault} have shown that, assuming the cost functions are convex and differentiable, and a minimum point in $\arg\min_x\sum_{i\in\V\backslash F}Q_i(x)$ always exists, it is necessary to have $2f$-redundancy for an $f$-resilient algorithm to exist, where $2f$-redundancy is defined as follows:
\begin{definition}[$2f$-redundancy \cite{gupta2020fault}]
    \label{def:redundancy-exact-original}
    For a given set of non-faulty agents $\H$, their local cost functions are said to satisfy the $2f$-redundancy if and only if for every subset $S\subseteq\H$ with $\mnorm{S}\geq n-2f$, 
    \begin{equation}
         \arg\min_x\sum_{i\in S}Q_i(x) = \arg\min_x\sum_{i\in\H}Q_i(x).
         \label{eqn:2f-redundancy-original}
    \end{equation}
\end{definition}
\noindent If a group of $n$ agents always satisfy $2f$-redundancy in Definition~\ref{def:redundancy-exact-original} no matter which up to $f$ agents   are faulty, by \eqref{eqn:2f-redundancy-original}, we must also have
\begin{equation}
     \arg\min_x\sum_{i\in S}Q_i(x) = \arg\min_x\sum_{i\in\V\backslash F}Q_i(x),
\end{equation}
where $F\subset\V$ with $\mnorm{F}\leq f$ is an arbitrary subset. It is also shown in \cite[Section 3.1]{gupta2020fault} that it is necessary to have $\bigcap_{i\in\V}\arg\min_xQ_i(x)\neq\varnothing$ for an $f$-resilient optimization algorithm to exist. Therefore, we can formally define the following:
\begin{definition}[$2f$-redundancy]
    \label{def:redundancy-exact}
    For a set $\V$ of $n$ agents, their local cost functions are said to satisfy the $2f$-redundancy if and only if (i) $\bigcap_{i\in\V}\arg\min_xQ_i(x)\neq\varnothing$, and (ii) for every subset $S\subseteq\V$ with $\mnorm{S}\geq n-2f$, 
    \begin{equation}
         \arg\min_x\sum_{i\in S}Q_i(x) = \arg\min_x\sum_{i\in\V}Q_i(x).
         \label{eqn:2f-redundancy}
    \end{equation}
\end{definition}
We call the property defined above \textbf{Property A}. By the discussion above, if the total number of agents is $n$ and the number of faulty agents is no more than $f$, part (ii) of Property A is implied by the property in Definition~\ref{def:redundancy-exact-original}, and therefore also a necessary condition for $f$-resilient optimization, while the necessity of part (i) is derived in \citep{gupta2020fault}. Thus, Property A \textit{is} necessary to have $f$-resilient optimization algorithms as well, for convex and differentiable cost functions.

We also have the following lemma that will be used later:
\begin{lemma}[\cite{gupta2020fault}]
    \label{lemma:minimum-set-intersection}
    For any non-empty subset $S$ of $\{1,...,n\}$, if 
    \begin{align}
        \bigcap_{i\in S}\arg\min{Q_i(x)} \neq \varnothing,
    \end{align}
    then 
    \begin{align}
        \arg\min\sum_{i\in S}Q_i(x) = \bigcap_{i\in S}\arg\min{Q_i(x)}.
    \end{align}
\end{lemma}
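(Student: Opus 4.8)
The plan is to establish the two inclusions separately, with the crucial intermediate observation that the hypothesis forces the minimum of the sum to coincide with the sum of the individual minima.

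First I would pick an arbitrary point $x^\star \in \bigcap_{i\in S}\arg\min Q_i(x)$, which exists by hypothesis. By definition of $\arg\min$, for every $i \in S$ and every $x$ we have $Q_i(x^\star) \le Q_i(x)$. Summing these inequalities over $i \in S$ gives $\sum_{i\in S} Q_i(x^\star) \le \sum_{i\in S} Q_i(x)$ for all $x$, so $x^\star \in \arg\min \sum_{i\in S} Q_i(x)$. This already shows $\bigcap_{i\in S}\arg\min Q_i(x) \subseteq \arg\min \sum_{i\in S} Q_i(x)$, and in addition it records the identity $\min_x \sum_{i\in S} Q_i(x) = \sum_{i\in S} Q_i(x^\star) = \sum_{i\in S} \min_x Q_i(x)$, i.e.\ the value of the aggregate problem is the sum of the individual optimal values.

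For the reverse inclusion, I would take any $y \in \arg\min \sum_{i\in S} Q_i(x)$ and compare it against $x^\star$. On one hand $\sum_{i\in S} Q_i(y) = \sum_{i\in S} Q_i(x^\star)$ by the identity just established; on the other hand $Q_i(y) \ge Q_i(x^\star)$ for each $i \in S$, since $x^\star$ minimizes each $Q_i$ individually. A term-by-term comparison then forces $Q_i(y) = Q_i(x^\star)$ for every $i \in S$: if any single one of these inequalities were strict, summing them would contradict the equality of the totals. Hence $y$ minimizes each $Q_i$, i.e.\ $y \in \bigcap_{i\in S}\arg\min Q_i(x)$, which gives $\arg\min \sum_{i\in S} Q_i(x) \subseteq \bigcap_{i\in S}\arg\min Q_i(x)$ and completes the proof.

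There is essentially no hard step here: the argument is elementary and uses only the fact that additivity of minimization holds precisely when all summands can be minimized simultaneously. The one thing to be mildly careful about is that the statement assumes neither convexity nor differentiability, so the proof must be purely order-theoretic (no first-order optimality conditions); the sketch above is, so no extra hypotheses are needed, and the existence of the individual minima of the $Q_i$ is itself guaranteed by the non-emptiness of $\bigcap_{i\in S}\arg\min Q_i(x)$.
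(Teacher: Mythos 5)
Your proof is correct. The paper states this lemma as a citation to \citet{gupta2020fault} without reproving it, and your two-inclusion argument (sum the pointwise inequalities at a common minimizer for one direction; use the resulting value identity plus a term-by-term comparison for the other) is exactly the standard proof from that reference, including the correct observation that no convexity or differentiability is needed.
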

Let us denote, for any agent $i$ and any set of agents $S$,
\begin{align}
    X_i &\triangleq \arg\min_x Q_i(x), \\
    X_S &\triangleq \arg\min_x\sum_{i\in S}Q_i(x).
\end{align}

By Lemma~\ref{lemma:minimum-set-intersection}, with Property A, finding a point in the set $\arg\min_x\sum_{i\in\V\backslash F}Q_i(x)$ is equivalent to finding a point in the intersection of the minimum set of the cost functions of the non-faulty agents $\bigcap_{i\in\V\backslash F}\arg\min_xQ_i(x)$.

Lemma~\ref{lemma:minimum-set-intersection} and Property A also imply that if the cost functions are convex and differentiable, for all $S\subseteq\V$ we have
\begin{align}
    X_S&=\bigcap_{i\in S}X_i.
\end{align}
Therefore, Property A of $2f$-redundancy in Definition~\ref{def:redundancy-exact} is equivalent to the following:
\begin{definition}[$2f$-redundancy, \textbf{Property B}]
    \label{def:set-redundancy-1}
    The cost functions of a set $\V$ of $n$ agents are said to satisfy $2f$-redundancy, if $\bigcap_{i\in 
    \V}X_i\neq\varnothing$, and for any subsets $S\subseteq\V$ with $\mnorm{S}\geq n-2f$, $\bigcap_{i\in S}X_i = \bigcap_{i\in\V}X_i$.
\end{definition}

Note that Property B is a property on the minimum set of each agent, while Property A is a property on the cost functions. We can also propose another redundancy definition in terms of the minimum set of each agent.
\begin{definition}[$2f$-redundancy, \textbf{Property C}]
\label{def:set-redundancy-2}
    The cost functions of a set $\V$ of $n$ agents are said to satisfy $2f$-redundancy, if $\bigcap_{i\in 
    \V}X_i\neq\varnothing$, and for any point $y\notin \bigcap_{i\in\V}X_i$, there exist at least $2f+1$ agents $i$, such that $y\notin X_i$.
\end{definition}
Lemmas~\ref{lemma:alternative-condition-necessity} and \ref{lemma:alternative-condition-sufficiency} show that Property C is also equivalent to Property B.

\begin{lemma}
    Suppose Property B is satisfied. We have (i) $\bigcap_{i\in\V}X_i\neq\varnothing$, and (ii) for any point $y\notin \bigcap_{i\in\V}X_i$, there exist at least $2f+1$ agents, such that for any agent $i$ of those $2f+1$ agents, $y\notin X_i$.
    \label{lemma:alternative-condition-necessity}
\end{lemma}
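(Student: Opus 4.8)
The plan is to derive both parts directly from Property B (Definition~\ref{def:set-redundancy-1}), with part (ii) handled by contraposition. Part (i) requires no work at all: the non-emptiness of $\bigcap_{i\in\V}X_i$ is literally the first clause of Property B.

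For part (ii), I would fix an arbitrary point $y\notin\bigcap_{i\in\V}X_i$ and let $T\triangleq\{i\in\V : y\notin X_i\}$ be the set of agents whose minimum set excludes $y$; the assertion to be proved is precisely that $\mnorm{T}\geq 2f+1$. I would argue by contradiction: suppose $\mnorm{T}\leq 2f$, and put $S\triangleq\V\setminus T$, so that $\mnorm{S}=n-\mnorm{T}\geq n-2f$, and $S$ is non-empty (in the regime of interest, $n>2f$). By the definition of $T$, every agent $i\in S$ satisfies $y\in X_i$, hence $y\in\bigcap_{i\in S}X_i$. Since $\mnorm{S}\geq n-2f$, the second clause of Property B yields $\bigcap_{i\in S}X_i=\bigcap_{i\in\V}X_i$, and therefore $y\in\bigcap_{i\in\V}X_i$ — contradicting the choice of $y$. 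Consequently $\mnorm{T}\geq 2f+1$, and any $2f+1$ agents drawn from $T$ form a set of agents each of whose minimum set omits $y$, which is exactly the claim.

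There is really no substantial obstacle here; the only point requiring a little care — more a bookkeeping matter than a difficulty — is to ensure that $S$ is a legitimate set to which the second clause of Property B applies, i.e.\ that $S$ is non-empty and $\mnorm{S}\geq n-2f$. Both hold as soon as $\mnorm{T}\leq 2f$ and $n>2f$; if one does not wish to assume $n>2f$ then the statement is vacuous or trivial in that regime, so no separate case analysis is needed.
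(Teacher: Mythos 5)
Your proposal is correct and follows essentially the same argument as the paper: part (i) is immediate from Property B, and part (ii) is proved by contradiction by taking the set $S$ of agents containing $y$, noting $\mnorm{S}\geq n-2f$, and invoking the second clause of Property B to force $y\in\bigcap_{i\in\V}X_i$. Your extra remark about the non-emptiness of $S$ when $n>2f$ is a minor bookkeeping point the paper leaves implicit.
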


\begin{proof}
    Part (i) of the lemma is the same as in Property B.
    
    For Part (ii), we prove it by contradiction. Suppose that there exists $y_0\notin \bigcap_{i\in \V}X_i$, such that there are no more than $2f$ agents $i$'s with $y_0\notin X_i$. In other words, there are no less than $n-2f$ agents $j$'s, such that $y_0\in X_j$. Suppose these agents form a set $S$. Thus, $y_0\in\bigcap_{i\in S}X_i$. 
    With Property B, $\bigcap_{i\in S}X_i=\bigcap_{i\in\V}X_i$ for all $S$ with $\mnorm{S}\geq n-2f$. Therefore, $y_0\in\bigcap_{i\in S}X_i$ implies $y_0\in \bigcap_{i\in\V}X_i$, which contradicts to our assumption that $y_0\notin\bigcap_{i\in\V}X_i$. Hence the proof.
\end{proof}

\begin{lemma}
    Suppose the cost functions of a group $\V$ of $n$ agents satisfy the following: (i) $\bigcap_{i\in\V}X_i\neq\varnothing$, and (ii) for any point $y\notin\bigcap_{i\in\V}X_i$, there exist at least $2f+1$ agents, such that for any agent $i$ of those $2f+1$ agents, $y\notin X_i$. Then Property B is also satisfied.
    \label{lemma:alternative-condition-sufficiency}
\end{lemma}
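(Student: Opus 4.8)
The plan is to prove the two clauses of Property B separately. Clause one of Property B, namely $\bigcap_{i\in\V}X_i\neq\varnothing$, is literally hypothesis (i), so nothing is needed there. The real content is clause two: for every $S\subseteq\V$ with $\mnorm{S}\geq n-2f$, we must show $\bigcap_{i\in S}X_i = \bigcap_{i\in\V}X_i$.

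One inclusion is free: since $S\subseteq\V$, every point lying in all $X_i$ for $i\in\V$ in particular lies in all $X_i$ for $i\in S$, so $\bigcap_{i\in\V}X_i\subseteq\bigcap_{i\in S}X_i$. For the reverse inclusion I would argue by contradiction. Suppose there is a point $y_0\in\bigcap_{i\in S}X_i$ with $y_0\notin\bigcap_{i\in\V}X_i$. Applying hypothesis (ii) to $y_0$, there exist at least $2f+1$ agents $i$ with $y_0\notin X_i$; call this set of agents $T$, so $\mnorm{T}\geq 2f+1$. On the other hand, $y_0\in X_i$ for every $i\in S$, so $T$ is disjoint from $S$, i.e.\ $T\subseteq\V\setminus S$. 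Hence $\mnorm{\V\setminus S}\geq\mnorm{T}\geq 2f+1$.

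This contradicts the size assumption on $S$: since $\mnorm{S}\geq n-2f$, we get $\mnorm{\V\setminus S}=n-\mnorm{S}\leq 2f<2f+1$. Therefore no such $y_0$ exists, giving $\bigcap_{i\in S}X_i\subseteq\bigcap_{i\in\V}X_i$, and combined with the trivial inclusion we conclude $\bigcap_{i\in S}X_i=\bigcap_{i\in\V}X_i$ for all such $S$, which is exactly Property B. There is no genuine obstacle here — the argument is a short counting/pigeonhole step — so the only thing to be careful about is bookkeeping: making sure the set $T$ of "bad" agents is correctly seen to be disjoint from $S$, and that the cardinality inequality $n-\mnorm{S}\le 2f$ is applied in the right direction.
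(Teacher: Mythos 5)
Your proof is correct and is essentially the same argument as the paper's: both take the trivial inclusion $\bigcap_{i\in\V}X_i\subseteq\bigcap_{i\in S}X_i$ for granted and derive a contradiction by counting that a witness $y_0\in\bigcap_{i\in S}X_i\setminus\bigcap_{i\in\V}X_i$ can be missing from at most $\mnorm{\V\setminus S}\leq 2f$ local sets, contradicting hypothesis (ii). The only difference is cosmetic bookkeeping (you name the set $T$ of agents excluding $y_0$ explicitly).
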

\begin{proof}
    The proof is by contradiction. 
    Note that we have $\bigcap_{i\in\V}X_i\subseteq\bigcap_{i\in S}X_i$.
    
    Suppose Property B is not satisfied. That is, there exists a set $S\subset\V$ with $\mnorm{S}\geq n-2f$, $\bigcap_{i\in S}X_i\neq\bigcap_{i\in\V}X_i$. Since $\bigcap_{i\in\V}X_i\subseteq\bigcap_{i\in S}X_i$, it follows that there exists $x\in\bigcap_{i\in S}X_i$, such that $x\notin\bigcap_{i\in\V}X_i$. Therefore, $x\in X_i$, where $i\in S$. There are at least $n-2f$ agents in $S$, meaning there are at most $2f$ agents $j$ with $x\notin X_j$. This contradicts part (ii) of the assumption in the lemma. Hence, the proof.
\end{proof}

\section{Byzantine set intersection with $2f$-redundancy}
\label{sec:set-inter}

As discussed in the previous section, the $2f$-redundancy property for Byzantine optimization problems, i.e., Property A, can be transformed into properties for set intersection, i.e., Properties B and C. In this section, we will study the problem of Byzantine set intersection directly. Specifically, we explore the necessary condition for an $f$-resilient algorithm to exist if $2f$-redundancy is satisfied.

From a distributed set intersection perspective, we first restate the two redundancy properties we derived in the previous section. Recall the definition of our problem in Section~\ref{sec:intro}, where there are $n$ agents in the distributed system, and each agent $i$ has a local set $X_i$. The goal is to find the intersection of all non-faulty agents $X_{\V\backslash F} = \bigcap_{i\in\V\backslash F}X_i$ in the presence of up to $f$ Byzantine agents, where $\V$ is the set of all agents and $F$ is the set of all Byzantine faulty agents in any execution.

\begin{definition}[$2f$-redundancy, \textbf{Property B}]
    \label{def:set-redundancy-1-1}
    The local sets of a set $\V$ of $n$ agents are said to satisfy $2f$-redundancy, if $\bigcap_{i\in\V}X_i\neq\varnothing$, and for any set of agents $S$ with $\mnorm{S}\geq n-2f$, $\bigcap_{i\in\V}X_i = \bigcap_{i\in S}X_i$.
\end{definition}
\begin{definition}[$2f$-redundancy, \textbf{Property C}]
\label{def:set-redundancy-2-1}
    The local sets of a set $\V$ of $n$ agents are said to satisfy $2f$-redundancy, if $\bigcap_{i\in\V}X_i\neq\varnothing$, and for any point $y\notin \bigcap_{i\in\V}X_i$, there exist at least $2f+1$ agents $i$, such that $y\notin X_i$.
\end{definition}
Note that following the discussions in Section~\ref{sec:optimization}, the two properties above are equivalent.

\subsection{Centralized architecture}
In a centralized architecture, there is a trusted server that communicates with all $n$ agents, and agents do not communicate with each other directly. In this case, $2f$-redundancy alone is sufficient to achieve $f$-resilience. Specifically, consider the following algorithm:

\begin{enumerate}[label=\textbf{Step \arabic*}]
    \item The server requires each agent $i$ to send its local set $X_i$. A non-faulty agent sends the actual local set, while a Byzantine agent $j$ may send an arbitrary set. 

    The server receives a set $Y_i$ from each agent $i$. For every non-faulty agent $i$, $Y_i=X_i$. For every Byzantine agent $j$, $Y_j$ can be an arbitrary set.
    \item For each set $T$ of agents, $\mnorm{T}=n-f$, calculate the following:
        \begin{itemize}
            \item The set $Y_T = \bigcap_{i\in T}Y_i$, 
            \item The set $Y_{\widehat{T}}=\bigcap_{i\in \widehat{T}}Y_i$ for all $\widehat{T}\subset T$, $\mnorm{\widehat{T}}\geq n-2f$.
        \end{itemize}
    \item If there exists a set $T$ of size $n-f$, such that for all $\widehat{T}\subset T$, $\mnorm{\widehat{T}}\geq n-2f$, $Y_{\widehat{T}}$ are equal, output the set $Y_T$.
\end{enumerate}

Now we are going to prove that if Property B of $2f$-redundancy stated in Definition \ref{def:set-redundancy-1-1} is satisfied, the algorithm is $f$-resilient, per Definition~\ref{def:f-resilient-set-intersection}.
\begin{proof}
    Suppose in any execution, the set of all non-faulty agents is $S$. Since there are up to $f$ faulty agents, $\mnorm{S}\geq n-f$.
    Given Property B, for $S$ and any subset $\widehat{S}\subset S$ with $\mnorm{S}\geq n-f$ and $\mnorm{\widehat{S}}\geq n-2f$, 
    \begin{align}
        \bigcap_{i\in\V}X_i = \bigcap_{i\in S}X_i = \bigcap_{i\in\widehat{S}}X_i.
        \label{eqn:centralized-equal-sets}
    \end{align}
    Consider the set $T$ of agents that defines the output of the algorithm $Y_T$. Let us denote $\widehat{S}' = S\cap T$. Since $\mnorm{T}=n-f$, we have $\mnorm{\widehat{S}'}\geq n-2f$. Also, $\widehat{S}'\subset S$. Therefore, by \eqref{eqn:centralized-equal-sets}, $\bigcap_{i\in\V}X_i = \bigcap_{i\in\widehat{S}'}X_i$. By the definition of the algorithm, since $\widehat{S}'\subset T$, $Y_T = \bigcap_{i\in\widehat{S}'}X_i$. Therefore, the output of the algorithm $Y_T = \bigcap_{i\in\V}X_i$. That is, the algorithm is correct if Property B is satisfied, in the presence of up to $f$ faulty agents.
\end{proof}

\subsection{Decentralized architecture with constrained algorithms}
\label{sub:constrained}

In a decentralized architecture, there is no server and the $n$ agents communicate directly to each other. The \textit{communication graph} $G=(\V,\mathcal{E})$ can be used to describe the connections of the agents, where $\V$ is the set of $n$ agents, and $\mathcal{E}$ is a set of directed edges. An edge $(i,j)\in\mathcal{E}$ represents that agent $i$ can send information to agent $j$. Let us also define $N_i^-=\left\{j|(i,j)\in\mathcal{E}\right\}$ for any agent $i$ to be the set of \textit{incoming neighbors} of $i$. Note that throughout the remainder of this report, we define $i\notin N_i^-$ for all $i\in\V$ (or equivalently $(i,i)\notin\mathcal{E}$), but agent $i$ knows its local set nonetheless. 

In an iterative algorithm starting with iteration 0, let us denote $X_i^t$ to be the local set of agent $i$ at the beginning of iteration $t$ with $X_i^0 \triangleq X_i$. Also, recall that we denote by $F$ the set of faulty agents in any execution. 

We consider in this section a class of constrained algorithms. In particular, a \textit{correct} $f$-resilient set intersection algorithm must satisfy the following criteria:
\begin{itemize}
    \item \textbf{Validity}: $\forall t>0$, $X_i^t\subseteq X_i^{t-1}$.
    \item \textbf{Convergence}: There exists a finite $\tau$ such that after $\tau$ iterations, $X_i^\tau=X_{\V\backslash F}$, for all $i\in\V\backslash F$.
\end{itemize}
Note that the algorithms we discuss here are constrained to maintain only a local set $X_i^t$, and no additional state. In each iteration, each agent may send their own set to their outgoing neighbors, receive their incoming neighbors' local sets, respectively, and then update its local set according to the information received.

Now we present a necessary condition on the underlying communication graph $G$ for the system to be able to solve the Byzantine set intersection problem. The graph should be sufficient to solve the problem for all inputs that satisfy the $2f$-redundancy condition. In particular, recall Property C in Definition \ref{def:set-redundancy-2-1}, where we assume that for any point $y\notin \bigcap_{i\in\V}X_i$, there are at least $2f+1$ agents $i$, such that $y\notin X_i$.

\begin{theorem}[Necessity]
    \label{thm:necessity-set-intersection}
    Assume that the number of agents $n\geq 2f+2$. Suppose $2f$-redundancy is satisfied. A constrained set intersection algorithm exists only if for any partition of $G$ into $L, R, F$, $\mnorm{F}\leq f$, the following conditions hold:
    \begin{enumerate}[label=\alph*)]
        \item If $\mnorm{{R}}\geq f+1$, there exists an agent $i\in L$, $\mnorm{N_i^-\cap{R}}\geq f+1$, and
        \item If $\mnorm{L}\geq f+1$, there exists an agent $i\in {R}$, $\mnorm{N_i^-\cap L}\geq f+1$.
    \end{enumerate}
\end{theorem}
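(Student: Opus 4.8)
The plan is to argue by contraposition: assume that there is a partition $L, R, F$ with $|F| \le f$ that violates one of the two conditions — say, without loss of generality, $|R| \ge f+1$ but every agent $i \in L$ has $|N_i^- \cap R| \le f$ (the failure of condition~a); the failure of~b is symmetric. I will then exhibit two input instances, both satisfying $2f$-redundancy, on which no constrained $f$-resilient algorithm can produce the correct output in every execution. The standard device is an indistinguishability argument: construct a ``good'' execution with faulty set $F_1$ and a ``bad'' execution with a different faulty set $F_2$ such that some non-faulty agent in $L$ cannot distinguish the two, yet the correct outputs differ, contradicting Convergence.

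Concretely, I would take a point $y$ that is in $X_i$ for every $i \in L \cup F$ but not in $X_j$ for $j \in R$ — so $y \notin \bigcap_{i \in \V} X_i$; by choosing $|R| \ge f+1$ appropriately and padding the remaining membership arbitrarily we can ensure $2f$-redundancy holds (Property~C needs $\ge 2f+1$ agents excluding $y$, which I arrange by a second block of exclusions, using $n \ge 2f+2$). In execution~1 the faulty set is $F$ itself; the non-faulty agents are $L \cup R$, and since $y \notin X_j$ for $j \in R \subseteq \V\backslash F$, the correct output excludes $y$. In execution~2 the faulty set is a size-$\le f$ subset $F' \subseteq R$ chosen together with $F$-behavior so that, from the viewpoint of agents in $L$, all messages ever received look identical to execution~1. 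Here is where the hypothesis is used: each $i \in L$ receives, across the whole run, at most $|N_i^- \cap R| \le f$ messages originating (directly or via relay through $L$'s own correct dynamics) from $R$-agents that are behaving honestly in execution~2; by making exactly those $R$-neighbors faulty and replaying execution~1's messages, every agent in $L$ sees the same transcript. But in execution~2 one arranges the non-faulty set to be $L \cup (R \setminus F')$ together with a re-interpretation in which $y$ belongs to every non-faulty agent's set, so the correct output must include $y$ — yet the constrained algorithm, by Validity, can only shrink sets, and agents in $L$ (which never had $y$ removed in their view, matching execution~1 where it eventually is removed) are forced into the execution-1 behavior and drop $y$. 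Reconciling these two requires care: the cleanest route is to keep the $L$-agents' local sets containing $y$ in both executions up to the point of divergence and show that no finite $\tau$ can simultaneously satisfy Convergence in both, because the transcripts seen by $L$ agree for arbitrarily long.

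The main obstacle is the bookkeeping of the indistinguishability construction in the \emph{iterative} model with relaying: information about $y \notin X_j$ for $j \in R$ can reach an $L$-agent not only on a direct edge from $R$ but also forwarded through other $L$-agents or through $F$. So the bound ``$\le f$ incoming neighbors in $R$'' must be leveraged to show that the faulty coalition (of size $\le f$, drawn from $R$) can \emph{simultaneously} silence/corrupt all the channels by which the truth about $y$ could cross the $L$–$R$ cut into $L$; this is exactly the cut-counting that conditions~a) and~b) encode. I would formalize this by induction on iteration number $t$: maintaining the invariant that, in the two executions, every non-faulty $L$-agent has the identical local set and has received identical messages through iteration $t$, using that the only new information crossing into $L$ at each step comes through the $\le f$ $R$-neighbors of each $L$-agent, all of which are made faulty. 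Establishing that this coalition can be chosen of total size $\le f$ (not $\le f$ per agent) is the delicate point, and it is precisely why the theorem asks for a \emph{single} agent $i \in L$ with $|N_i^- \cap R| \ge f+1$ rather than a condition on all of $L$: the negation gives a uniform bound that one size-$\le f$ faulty set within $R$ can exploit. Once the invariant is in place, Convergence in execution~1 forces $y$ to be dropped by $L$-agents, while correctness in execution~2 forbids it, giving the contradiction. The symmetric failure of condition~b) is handled by swapping the roles of $L$ and $R$.
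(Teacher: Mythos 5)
Your overall strategy is the same as the paper's: negate condition a), build an input in which some point $y$ lies in the sets of all $L$-agents but is missing from the sets across the cut, and run an indistinguishability argument in which a single agent in $L$ cannot tell whether the truth about $y$ is coming from honest $R$-agents or from a faulty coalition of size at most $f$. Two specific points in your sketch do not hold up as written, though.

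First, your input instance does not satisfy $2f$-redundancy. You place $y\in X_i$ for every $i\in L\cup F$ and $y\notin X_j$ only for $j\in R$, which yields as few as $f+1$ agents excluding $y$, while Property C demands at least $2f+1$. Your ``second block of exclusions'' is left unspecified, and it cannot come from $L$ without destroying the scenario in which $y$ belongs to every non-faulty agent's set. The paper resolves this by first transforming the partition into $L',R',F'$ with $\mnorm{F'}=f$ exactly (feasible since $n\geq 2f+2$), and then letting the \emph{true} inputs of the faulty agents in $F'$ exclude $y$ while those agents \emph{behave} as if they contained $y$; together with $\mnorm{R'}\geq f+1$ this gives the required $2f+1$ excluders without touching $L'$.

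Second, your resolution of what you call the delicate point is incorrect: the negation of condition a) gives $\mnorm{N_i^-\cap R}\leq f$ \emph{separately for each} $i\in L$, and the union $\bigcup_{i\in L}(N_i^-\cap R)$ can be all of $R$, so there is no single size-$\leq f$ coalition in $R$ that ``silences all channels by which the truth about $y$ could cross the cut.'' The argument must be per-agent: consider the first iteration $t$ at which some $l\in L'$ removes $y$, and use $N_l^-\cap R'$ (of size $\leq f$) as the alternative faulty set \emph{for that agent alone}. Relaying through other $L'$-agents is then controlled not by corrupting their $R$-neighbors but by the minimality of $t$ (no other $L'$-agent has yet dropped $y$, so all sets $l$ receives from $L'$ still contain $y$) together with validity; making the universe of values essentially $\{y,z\}$ with $z$ in everyone's set keeps the two transcripts at $l$ literally identical. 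With that repair your induction-on-iterations invariant goes through and matches the paper's argument.
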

\begin{proof}
    We first prove condition a) is necessary. The proof is by contradiction. Suppose that a correct set intersection algorithm on graph $G$ exists, and there exists a partition $L, {R}, F$ of $G$ with $\mnorm{F}\leq f$ and $\mnorm{{R}}\geq f+1$, such that for each $i$ in $L$, $\mnorm{N_i^-\cap{R}}\leq f$. 

    Consider the following transformation of the three sets $L$, $R$, and $F$: identify $L_1\subset L$ and $R_1\subset R$, such that $L\backslash L_1\neq\varnothing$, $\mnorm{R\backslash R_1}\geq f+1$, and $\mnorm{F\cup L_1\cup R_1}=f$, and let $L'=L\backslash L_1$, $R'=R\backslash R_1$, and $F'=F\cup L_1\cup R_1$. $L'$, $R'$, $F'$ is also a partition of $G$, with $\varnothing\neq L'\subseteq L$, $R'\subseteq R$ and $F\subseteq F'$, $|R'|\geq f+1$, $|F|=f$, and for each $i$ in $L'$, $\mnorm{N_i^-\cap{R'}}\leq f$. Note that this transformation is always feasible because $n\geq 2f+2$.
    
    

    Consider input sets such that there exists some point $y$ for which there are at least $2f+1$ agents $i$ such that $y\not\in X_i$. 
    Specifically, for each agent $i\in F'\cup R'$, $y\notin X_i$. Since $|F'|=f$, $|R'|\geq f+1$, there are at least $2f+1$ agents whose local sets do not have $y$.
    Also, for each agent $i\in L'$, $y\in X_i$.
    
    For any agent $l\in L'$, there are at most $f$ agents in $R'$ that can send information to $l$. Suppose that in an execution, all agents in $F'$ are faulty, and any faulty agent in $F'$ will always send to $l$ a set that contains $y$.
    Any agent $l\in L'$ cannot distinguish the following two scenarios:
    \begin{enumerate}[nosep, label=\roman*)]
        \item The actual situation, where agents in $F'$ are faulty, agents in $R'$ are non-faulty, and $y\notin X_i$ for $i\in R'$. For each agent $i\in L'$, $y\in X_i$. 
        In this case, $y$ is not in $\bigcap_{i\in\V\backslash F'}X_i$, the intersection of all non-faulty local sets, and should be removed from $X_l$.
        \item Another possible situation from $l$'s perspective, where the agents in $N_l^- \cap R'$ are faulty, the remaining agents are non-faulty, and for each non-faulty agent $i$, $y\in X_i$. In this case, $y$ is in $\bigcap_{i\in\V\backslash(N_l^-\cap R')}X_i$, which is the intersection of all non-faulty local sets.
    \end{enumerate}
    Note that both scenarios are compatible with the $2f$-redundancy property: in Scenario i), there are at least $2f+1$ agents whose local sets do not have $y$, while in Scenario ii), all the non-faulty agents have $y$ in their local sets.
    
    Let us assume that the validity condition is satisfied, that is, $\forall t$, $X_l^t\subseteq X_l^{t-1}$.
    If in any iteration $t$, agent $l$ chooses to remove $y$ from $X_l$, then its output would not satisfy the convergence condition in the second scenario above. On the other hand, if agent $l$ does not remove $y$ from $X_l$ in any iteration, then its output would not satisfy the convergence condition in the first scenario. Thus, agent $l$'s output cannot be guaranteed to be correct in all executions.
    This is a contradiction with the assumption that a correct algorithm exists. Thus condition a) is necessary.
    
    Similarly, it can be shown that condition b) is also necessary.
\end{proof}

The necessary condition in Theorem~\ref{thm:necessity-set-intersection} is also sufficient for an $f$-resilient algorithm to exist if $2f$-redundancy is satisfied.
\begin{theorem}[Sufficiency]
    \label{thm:sufficiency-set-intersection}
    Suppose $2f$-redundancy is satisfied and $n\geq 2f+2$, a constrained set intersection algorithm exists if for any partition of $\G$ into $L, R, F$, $\mnorm{F}\leq f$, 
    \begin{enumerate}[label=\alph*)]
        \item If $\mnorm{{R}}\geq f+1$, there must exist an agent $i\in L$, $\mnorm{N_i^-\cap{R}}\geq f+1$, and
        \item If $\mnorm{L}\geq f+1$, there must exist an agent $i\in {R}$, $\mnorm{N_i^-\cap L}\geq f+1$.
    \end{enumerate}
\end{theorem}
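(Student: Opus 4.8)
The plan is to exhibit an explicit iterative algorithm fitting the constrained framework (each agent keeps only its set $X_i^t$, no extra state) and to verify \textbf{Validity} and \textbf{Convergence}. The natural candidate is a local filtering rule: in iteration $t$, agent $i$ broadcasts $X_i^t$ to its out-neighbours, collects the sets $Y_j^t$ received from $j\in N_i^-$, and sets
\[ X_i^{t+1} = X_i^{t}\setminus\bigl\{\, y :\ \lvert\{\, j\in N_i^- :\ y\notin Y_j^{t}\,\}\rvert \ge f+1 \,\bigr\}, \]
i.e.\ $i$ discards a point $y$ as soon as at least $f+1$ incoming neighbours report a set missing $y$. \textbf{Validity} is immediate since the rule only deletes elements.

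The correctness argument splits into a safety half and a progress half. For safety I would prove by induction on $t$ the invariant $X_{\V\backslash F}=\bigcap_{k\in\V\backslash F}X_k\subseteq X_i^{t}$ for every non-faulty $i$: the base case is the definition of $X_i^0$, and in the step, if non-faulty $i$ deletes $y$ at iteration $t$ then at least one of the $\ge f+1$ reporting neighbours is non-faulty, say $j$, with $y\notin X_j^{t}$; the inductive hypothesis gives $X_{\V\backslash F}\subseteq X_j^{t}$, so $y\notin X_{\V\backslash F}$ and the deletion is harmless. This shows simultaneously that a ``good'' point is never removed and that the honest sets always contain the target.

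For progress, fix $y\notin X_{\V\backslash F}$. Property B gives $X_{\V\backslash F}=\bigcap_{k\in\V}X_k$ (as $\lvert\V\backslash F\rvert\ge n-f\ge n-2f$), and then Property C yields at least $2f+1$ agents, hence at least $f+1$ \emph{non-faulty} agents, whose original set excludes $y$. Let $A_t$ be the set of non-faulty agents with $y\notin X_i^{t}$; by Validity $A_t$ is non-decreasing and $\lvert A_0\rvert\ge f+1$. While $A_t\neq\V\backslash F$, apply the hypothesis to the partition $(L_t,R_t,F)$ with $R_t=A_t$ and $L_t=(\V\backslash F)\setminus A_t\neq\varnothing$: since $\lvert R_t\rvert\ge f+1$, condition a) yields $i\in L_t$ with $\lvert N_i^-\cap R_t\rvert\ge f+1$; those neighbours are non-faulty and report $y$-free sets, so $i$ deletes $y$ and enters $A_{t+1}$. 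Thus $\lvert A_t\rvert$ strictly increases until $A_t=\V\backslash F$, in at most $\lvert\V\backslash F\rvert-(f+1)\le n-f-1$ iterations, a bound uniform in $y$. Taking $\tau=n-f-1$, after $\tau$ iterations $X_i^{\tau}$ contains no point outside $X_{\V\backslash F}$, which together with the safety invariant gives $X_i^{\tau}=X_{\V\backslash F}$ for all non-faulty $i$, i.e.\ \textbf{Convergence}. (Condition b) is not needed separately: applying a) to the partition $(R,L,F)$ already gives b), so the two conditions jointly amount to a) for all partitions.)

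I expect the progress step to be the main obstacle. The delicate points are: choosing the partition \emph{dynamically} so that the agents that have removed $y$ play the role of $R$; checking that $\lvert R_t\rvert\ge f+1$ holds at all times and for \emph{every} $y$ at once (this is exactly where the $2f+1$-vs-$f$ count of $2f$-redundancy is essential); and extracting a finite, $y$-uniform iteration count $\tau$ even though the set of offending points $y$ may be infinite. The safety induction and Validity are routine by comparison.
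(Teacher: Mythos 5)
Your proposal is correct and follows essentially the same route as the paper: the same $f{+}1$-threshold deletion rule, and the same convergence argument in which the set of non-faulty agents that have discarded a bad point $y$ grows by at least one per iteration via the partition condition (your use of condition a) with $R$ playing the role of the already-converted agents is exactly the paper's use of condition b) with the labels $L$ and $R$ swapped, as you note). Your explicit safety induction showing that points of $X_{\V\backslash F}$ are never deleted is a step the paper leaves implicit, but it is routine and correct.
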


\begin{proof}
    The proof is constructive. Consider the following algorithm:

    Each agent $i$ sets its set $X_i^0$ to be its input set. In each iteration $t$,
    \begin{itemize}
        \item[\textbf{Step 1}] Each agent $i$ sends the set $X_i^t$ to all agents $j$ with $(i,j)\in\mathcal{E}$. Byzantine faulty agent may send arbitrary set.
        \item[\textbf{Step 2}] Each agent $j$ receives sets from all agents in $N_j^-$. For each agent $k\in N_j^-$, denote the set $j$ receives by $Z_k^t$. For a non-faulty agent $k$, $Z_k^t = X_k^t$. For a Byzantine agent $l$, $Z_l^t$ can be an arbitrary set.
        \item[\textbf{Step 3}] Each agent $i$ updates its set by the following rule: Initiate the set $Y_i^t$ to be empty. For any point $y\in X_i^t$, if there are more than $f$ agents $j\in N_i^-$ such that $y\notin Z_j^t$, add $y$ to the set $Y_i^t$. Then, the local set of $i$ is updated as
        \begin{align}
            X_i^{t+1} \leftarrow X_i^t\backslash Y_i^t.
        \end{align}
    \end{itemize}

    The algorithm by design satisfies the validity criteria.

    Suppose $F\subset\V$ is the set of all faulty agents. Now we show that in any execution, for any point $y\notin \bigcap_{i\in\V\backslash F}X_i$, $y$ will eventually be removed from the sets of all non-faulty agents. 
    
    Consider the partition $L_0, R_0, F$ of $G$, where $L_0$ is the set of all non-faulty agents that do not have $y$ in their sets at the beginning of Iteration 0. Since $\mnorm{F}\leq f$, by $2f$-redundancy, $\mnorm{L_0}\geq f+1$. By condition b) of this theorem, there exists an agent $r\in R_0$, such that there are at least $f+1$ incoming neighbors in $L_0$, and these neighbors will send their sets to $r$ in iteration 0, which do not include the point $y$. Therefore, according to the algorithm, $r$ receives at least $f+1$ sets that do not contain $y$, and it removes $y$ from its set in Iteration 0.

    In the next iteration, consider the partition $L_1 = L_0 \cup \{r\}$, $R_1 = L_0 \backslash \{r\}$, $F$ of $G$. Similarly, there will be another agent in $R_1$ whose set will remove $y$ in Iteration 1. Since $\mnorm{R_0}< n-2f$, in at most $n-2f$ iterations, $y\notin\bigcap_{i\in\V\backslash F}X_i$ will be removed from the sets of all non-faulty agents. Therefore, the algorithm satisfies the convergence criteria.
\end{proof}


It is worth noting that when $n\leq 2f+1$, it is necessary to have $X_i=X_j$ for all $i,j\in\V$ for an $f$-resilient algorithm to exist. 
This condition is also sufficient: an $f$-resilient algorithm only needs each non-faulty agent to output its input set directly.

\subsubsection{An equivalent necessary condition}
\label{sec:equiv}

In the previous section, we presented the necessary and sufficient condition on the underlying communication graph (Theorems~\ref{thm:necessity-set-intersection} and~\ref{thm:sufficiency-set-intersection}) for a Byzantine distributed set-intersection algorithm to exist, if $2f$-redundancy defined in Definition~\ref{def:set-redundancy-2} is satisfied. Let us call the condition in Theorems~\ref{thm:necessity-set-intersection} and~\ref{thm:sufficiency-set-intersection} \textit{Condition A}.
\begin{definition}[Condition A]
    A graph $G(\V,\mathcal{E})$ is said to satisfy \emph{Condition A}, if the following is true:
    \begin{enumerate}[label=\alph*)]
        \item If $\mnorm{{R}}\geq f+1$, there exists an agent $i\in L$, $\mnorm{N_i^-\cap{R}}\geq f+1$, and
        \item If $\mnorm{L}\geq f+1$, there exists an agent $i\in {R}$, $\mnorm{N_i^-\cap L}\geq f+1$.
    \end{enumerate}
    \label{def:condition-a}
\end{definition}

In this section, we present an equivalent condition to Condition A. Note that the terms \textit{agent} and \textit{node} are used interchangeably, both referring to an element in $\V$ in the communication graph.
We first introduce some definitions from \cite{vaidya2012iterative}.

\begin{definition}[Graph Decomposition \cite{vaidya2012iterative}]
    \label{def:decomposition}
    Let $H$ be a directed graph. Partition graph $H$ into non-empty strongly connected components $H_1$,...,$H_h$, where $h$ is a non-zero integer dependent on graph $H$, such that 
    \begin{enumerate}[label=\roman*)]
        \item Every pair of nodes within the same strongly connected component has directed paths in $H$ to each other, and
        \item For each pair of nodes, say $i$ and $j$, that belong to two different strongly connected components, either $i$ does not have a directed path to $j$ in $H$, or $j$ does not have a directed path to $i$ in $H$.
    \end{enumerate}
    The partition of $H$ is called a \emph{decomposition} of $H$.
    
    Construct a graph $H^d$ wherein each strongly connected component $H_k$ above is represented by vertex $c_k$, and there is an edge from vertex $c_k$ to $c_l$ if and only if the nodes in $H_k$ have directed paths in $H$ to the nodes in $H_l$.
\end{definition}

It is known that the decomposition graph $H^d$ is a directed acyclic graph \cite{dasgupta2006algorithms}.

\begin{definition}[Source Component \cite{vaidya2012iterative}]
    Let $H$ be a directed graph, and $H^d$ be its decomposition as per Definition~\ref{def:decomposition}. A strongly connected component $H_k$ of $H$ is said to be a \emph{source component} if the corresponding vertex $c_k$ in $H^d$ is not reachable from any other vertex in $H^d$.
\end{definition}

\begin{definition}[Reduced graph \cite{vaidya2012iterative}]
    \label{def:reduced-graph}
    For a given graph $G(\V,\mathcal{E})$ and $F\subset \V$, a graph $G_F(\V_F,\mathcal{E}_F)$ is a \emph{reduced graph} of $G$, if 
    \begin{enumerate}[label=\roman*)]
        \item $\V_F=\V-F$, and
        \item $\mathcal{E}_F$ is obtained by first removing all links incident on the nodes in $F$, and then removing up to $f$ other incoming links at each node in $\V_F$.
    \end{enumerate}
\end{definition}
Note that for a given graph $G$ and a given $F$, there may exist multiple reduced graphs $G_F$.

Now, we introduce the new condition, defined below as Condition B, and prove its equivalence to Condition A.
\begin{definition}[Condition B]
    A graph $G(\V,\mathcal{E})$ is said to satisfy \emph{Condition B}, if for any $F\subset\V$ such that $\mnorm{F}\leq f$, in every reduced graph $G_F$ obtained as per Definition~\ref{def:reduced-graph}, 
    the size of any source component is at least $n-2f$.
    \label{def:condition-b}
\end{definition}
We first show that Condition A implies Condition B when $\mnorm{\V}\geq 2f+2$.
\begin{lemma}
    \label{lemma:source-component-equivalent-1}
    Suppose that Condition A holds for graph $G(\V,\mathcal{E})$, and $\mnorm{\V}\geq 2f+2$. Then Condition B also holds for $G$. 
\end{lemma}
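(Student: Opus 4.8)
The plan is to argue by contradiction. Fix an arbitrary $F\subset\V$ with $\mnorm{F}\leq f$ and an arbitrary reduced graph $G_F(\V_F,\mathcal{E}_F)$ of $G$, and suppose that some source component $C$ of $G_F$ has $\mnorm{C}\leq n-2f-1$; the goal is to contradict Condition A.

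First I would build a partition of $G$ to feed into Condition A: take $R=C$, keep the faulty set $F$, and set $L=\V\setminus(F\cup C)$. This is a genuine partition of $\V$, and $\mnorm{L}=n-\mnorm{F}-\mnorm{C}\geq (n-f)-(n-2f-1)=f+1$, using the contradiction hypothesis on $\mnorm{C}$; note that $n\geq 2f+2$ is what makes the target bound $n-2f$ non-trivial in the first place. Since $\mnorm{L}\geq f+1$, part b) of Condition A gives a node $i\in R=C$ with $\mnorm{N_i^-\cap L}\geq f+1$, where $N_i^-$ is the incoming neighbourhood in the original graph $G$.

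Next I would contradict this by inspecting how $\mathcal{E}_F$ is built at node $i$. Since $C$ is a source component of $G_F$, no node outside $C$ has a directed path into $C$ in $G_F$; in particular, $i$ has no incoming edge in $G_F$ from any node of $L=\V_F\setminus C$. But in the construction of the reduced graph one first deletes all links incident on $F$ --- and no $L$-to-$i$ link is of that form, because $L\cap F=\varnothing$ --- and then deletes at most $f$ further incoming links at $i$. Hence every one of the $\mnorm{N_i^-\cap L}$ edges from $L$ into $i$ that are present in $G$ must be deleted in that second, $f$-bounded step, so $\mnorm{N_i^-\cap L}\leq f$, contradicting $\mnorm{N_i^-\cap L}\geq f+1$. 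Therefore no such $C$ exists, and every source component of every reduced graph $G_F$ has size at least $n-2f$, i.e., Condition B holds.

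The argument only invokes part b) of Condition A; part a) is the symmetric ingredient needed for the converse implication. The one delicate point --- and the main thing to get right in a careful write-up --- is the bookkeeping in the last step: one must verify that the $F$-incident deletions are ``free'' (disjoint from the $L$-to-$i$ edges), so that removing all of the $L$-to-$i$ edges genuinely consumes the per-node deletion budget of $f$, which is exactly what yields $\mnorm{N_i^-\cap L}\leq f$.
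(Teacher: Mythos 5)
Your proof is correct and essentially identical to the paper's: both derive a contradiction by turning an undersized source component into a partition $L,R,F$ and observing that the reduced-graph construction can delete at most $f$ incoming links not incident on $F$ at any node, contradicting the $f+1$ incoming neighbours guaranteed by Condition A. The only difference is that you name the source component $R$ and invoke part b), whereas the paper names it $L$ and invokes part a); since Condition A is quantified over all partitions, its two parts are interchangeable under swapping $L$ and $R$, so this is purely cosmetic.
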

\begin{proof}
    
    The proof is by contradiction. Suppose that Condition A is satisfied, and that there exists some $F\subset\V$ with $\mnorm{F}\leq f$ and a reduced graph $G_F(\V_F,\mathcal{E}_F)$ corresponding to $F$, such that the decomposition of $G_F$ includes a source component of size less than $n-2f$.
    Suppose the source component in $G_F$ is $L$, with $\mnorm{L}<n-2f$. Let $R=\V\backslash(L\cup F)$. $L$, $R$, and $F$ forms a partition of $G$. Since $\mnorm{F}\leq f$ and $\mnorm{L}<n-2f$, $\mnorm{R}\geq f+1$. Since $L$ is a source component in $G_F$, it follows that there are no direct links in $\mathcal{E}_F$ from any node in $R$ to $L$. By the definition of a reduced graph, there are no more than $f$ links in $\mathcal{E}$ from the nodes in $R$ to any node in $L$ in graph $G$. This contradicts part a) of Condition A. Hence, the proof.
\end{proof}

\begin{corollary}
    If $n>3f$, Condition A implies that there is only one source component in any reduced graph $G_F$ of graph $G$.
    \label{corollary:1}
\end{corollary}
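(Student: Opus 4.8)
The plan is to argue by contradiction, working directly with Condition A rather than routing through Condition B. Suppose $n>3f$, that $G$ satisfies Condition A, and that for some $F\subseteq\V$ with $\mnorm{F}\le f$ the decomposition of a reduced graph $G_F$ has two distinct source components $S_1$ and $S_2$. Since distinct strongly connected components are disjoint and $S_1,S_2\subseteq\V_F=\V\setminus F$, the three sets $S_1$, $S_2$, and $F$ are pairwise disjoint. (Every finite DAG has at least one source, so ruling out two or more source components is exactly the claimed uniqueness.)

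The key fact I would establish first is a ``source component'' observation: if $S$ is a source component of $G_F$ and we set $L=S$ and $R=\V\setminus(S\cup F)$, then every node $i\in L$ has $\mnorm{N_i^-\cap R}\le f$ in the original graph $G$. Indeed, because $S$ is a source component, no edge of $\mathcal{E}_F$ enters $S$ from outside $S$; in particular no edge of $\mathcal{E}_F$ goes from a node of $R$ into $i$. Hence every edge of $\mathcal{E}$ from $R$ to $i$ is deleted when $G_F$ is formed, and since such edges have their tail in $R\subseteq\V_F$ (hence are not incident on $F$), they are removed by the step that deletes at most $f$ incoming links per node in $\V_F$. Therefore $\mnorm{N_i^-\cap R}\le f$.

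Next I would apply part a) of Condition A to the partition $(S,\ \V\setminus(S\cup F),\ F)$, which is legitimate since $\mnorm{F}\le f$. Condition A a) would require some node of $S$ with at least $f+1$ incoming neighbours in $R=\V\setminus(S\cup F)$ whenever $\mnorm{R}\ge f+1$, and the source-component observation forbids this; hence $\mnorm{R}\le f$, and so $\mnorm{S}=n-\mnorm{F}-\mnorm{R}\ge n-2f$. Applying this with $S=S_1$ gives $\mnorm{S_1}\ge n-2f$; applying it with $S=S_2$ gives $\mnorm{\V\setminus(S_2\cup F)}\le f$. Because $S_1$ is disjoint from both $S_2$ and $F$, we have $S_1\subseteq\V\setminus(S_2\cup F)$, so $\mnorm{S_1}\le f$. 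Combining, $n-2f\le\mnorm{S_1}\le f$, i.e.\ $n\le 3f$, contradicting $n>3f$.

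I do not expect a substantive obstacle; the only place that needs care is the bookkeeping in the source-component observation — namely confirming that the edges from $R$ into a node of $S$ are genuinely among the ``up to $f$ other incoming links'' removed per node and not counted among the links removed for being incident on $F$ — together with checking that both invocations of Condition A a) are applied to genuine partitions with $\mnorm{F}\le f$, which holds by construction.
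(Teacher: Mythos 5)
Your proof is correct and follows essentially the same route as the paper: your ``source component observation'' combined with Condition~A~a) is exactly the paper's Lemma~\ref{lemma:source-component-equivalent-1} (every source component of a reduced graph has size at least $n-2f$), re-derived inline. The only cosmetic difference is the final step, where you obtain the contradiction by counting ($S_1\subseteq\V\setminus(S_2\cup F)$ forces $n-2f\le\mnorm{S_1}\le f$) rather than by a second application of Condition~A; this matches the counting argument the paper gives in its footnote as an alternative proof of the corollary.
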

\begin{proof}
    The proof is by contradiction. Suppose Condition A holds for graph $G(\V,\mathcal{E})$ with $n>3f$, and there is a set $F\subset\V$ with $\mnorm{F}\leq f$, such that there are two source components $L$ and $R$. By Lemma~\ref{lemma:source-component-equivalent-1} and $n>3f$, we have $\mnorm{L}\geq n-2f\geq f+1$ and similarly, $\mnorm{R}\geq f+1$. Let $C = \V\backslash(L\cup R\cup F)$. $L$, $R\cup C$, and $F$ forms a partition of $G$. 
    
    Since $\mnorm{R}\geq f+1$, $\mnorm{R\cup C}\geq f+1$. By Condition A, there exist an agent $i\in L$, such that $\mnorm{N_i^-\cap(R\cup C)}\geq f+1$. However, since $L$ is a source component in the reduced graph $G_F$, for any agent $i\in L$, there are at most $f$ incoming neighbors from agents in $R\cup C$ in $G$. In other words, $\mnorm{N_i^-\cap(R\cup C)}\leq f$. Contradiction. Hence, the proof.
\end{proof}

\begin{lemma}
    \label{lemma:source-component-equivalent-1-alt}
    Suppose that Condition A holds for graph $G(\V,\mathcal{E})$, and $\mnorm{\V}\geq 2f+2$. For any $F\subset\V$ such that $\mnorm{F}=\phi\leq f$, in every reduced graph $G_F$ obtained as per Definition~\ref{def:reduced-graph}, 
    the size of any source component must be at least $n-\phi-f$.
\end{lemma}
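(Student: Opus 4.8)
The plan is to mirror the contradiction argument used for Lemma~\ref{lemma:source-component-equivalent-1}, but with sharper bookkeeping that tracks $\phi=\mnorm{F}$ exactly rather than bounding it by $f$. First I would suppose, for contradiction, that Condition A holds, $\mnorm{\V}\geq 2f+2$, and that for some $F$ with $\mnorm{F}=\phi\leq f$ there is a reduced graph $G_F(\V_F,\mathcal{E}_F)$ whose decomposition contains a source component $L$ with $\mnorm{L}<n-\phi-f$. Set $R=\V\backslash(L\cup F)$, so that $L,R,F$ is a partition of $\V$ (and $L\neq\varnothing$ since source components are non-empty by definition).

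The crucial count is the size of $R$: since $\mnorm{R}=n-\phi-\mnorm{L}$ and $\mnorm{L}<n-\phi-f$, we get $\mnorm{R}> n-\phi-(n-\phi-f)=f$, hence $\mnorm{R}\geq f+1$. This is exactly the place where the improved constant $n-\phi-f$ enters: by not discarding the slack between $\phi$ and $f$ we still land on $\mnorm{R}\geq f+1$, which is all we need to apply Condition A.

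Then I would invoke the two structural facts. Because $L$ is a source component of $G_F$, there is no edge of $\mathcal{E}_F$ from any node of $R$ to any node of $L$. By Definition~\ref{def:reduced-graph}, $\mathcal{E}_F$ is obtained from $\mathcal{E}$ by first deleting all links incident on $F$ and then removing at most $f$ further incoming links at each node of $\V_F$; since $R\cap F=\varnothing$, every $\mathcal{E}$-edge from $R$ into a fixed node $i\in L$ must have been among those at most $f$ deletions at $i$, so $\mnorm{N_i^-\cap R}\leq f$ for every $i\in L$. But $\mnorm{R}\geq f+1$, so part a) of Condition A forces the existence of some $i\in L$ with $\mnorm{N_i^-\cap R}\geq f+1$, a contradiction. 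This completes the argument.

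I do not expect any real obstacle here; the content is entirely a tightening of Lemma~\ref{lemma:source-component-equivalent-1}, and the only points requiring a moment's care are checking that $L$ is non-empty and that $\mnorm{R}\geq f+1$ so that Condition A~a) is applicable — both handled by the vertex count above.
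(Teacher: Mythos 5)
Your proof is correct and is essentially identical to the paper's own argument: both take the same contradiction route, form the partition $L$, $R=\V\backslash(L\cup F)$, derive $\mnorm{R}=n-\phi-\mnorm{L}>f$ from the sharper bound $\mnorm{L}<n-\phi-f$, and then contradict part a) of Condition A using the fact that a source component of $G_F$ has at most $f$ incoming $\mathcal{E}$-links from $R$ at each of its nodes. Your write-up is if anything slightly more careful about why each edge from $R$ into $i\in L$ must have been among the at most $f$ removed links.
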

\begin{proof}
    The proof largely follows the proof of Lemma~\ref{lemma:source-component-equivalent-1}. 

    The proof is by contradiction. Suppose that there exists some $F\subset\V$ with $\mnorm{F}=\phi\leq f$, and a reduced graph $G_F(\V_F,\mathcal{E}_F)$ corresponding to $F$, such that the decomposition of $G_F$ includes a source component of size less than $n-\phi-f$.

    Suppose the source component in $G_F$ is $L$, with $\mnorm{L}<n-\phi-f$. Let $R=\V\backslash(L\cup F)$. $L$, $R$, and $F$ forms a partition of $G$. Since $\mnorm{F}=\phi$, $\mnorm{R}>f$, or $\mnorm{R}\geq f+1$. Since $L$ is a source component in $G_F$, it follows that there are no direct links in $\mathcal{E}_F$ from any node in $R$ to $L$. By the definition of a reduced graph, there are no more than $f$ links $\mathcal{E}$ from the nodes in $R$ to any node in $L$ in graph $G$. This contradicts part a) of Condition A. Hence, the proof.
\end{proof}

Note that Corollary~\ref{corollary:1} can be proved with Lemma~\ref{lemma:source-component-equivalent-1-alt} as well.\footnote{The proof is by contradiction. Suppose Condition A holds for graph $G$ with $n>3f$, and there exists a set $F\subset\V$ with $\mnorm{F}\triangleq\phi\leq f$, such that there are two source components $L$ and $R$ in the reduced graph $G_F$. By Lemma~\ref{lemma:source-component-equivalent-1-alt}, $\mnorm{L}\geq n-f-\phi$ and $\mnorm{R}\geq n-f-\phi$. The three sets $L,R,F$ are disjoint, so $\mnorm{L\cup R\cup F}\geq 2n-2f-2\phi+\phi=2n-2f-\phi$. On the other hand, $\mnorm{L\cup R\cup F}\leq n$. Therefore, $2n-2f-\phi\leq n$, which implies that $n\leq 2f+\phi\leq 3f$, which contradicts our assumption of $n> 3f$.}

Now we show the other direction of the equivalency.
\begin{lemma}
    \label{lemma:source-component-equivalent-2}
    Suppose Condition B holds for graph $G(\V,\mathcal{E})$ with $\mnorm{\V}\geq 2f+2$. Condition A also holds.
\end{lemma}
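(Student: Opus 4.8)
The plan is to prove the contrapositive of Lemma~\ref{lemma:source-component-equivalent-2}: if Condition~A fails for $G$, then Condition~B fails. So I would start by fixing a partition $L, R, F$ of $\V$ with $\mnorm{F}\le f$ that violates Condition~A. By the symmetry of parts a) and b), I may assume part a) is the one violated, i.e. $\mnorm{R}\ge f+1$ while $\mnorm{N_i^-\cap R}\le f$ for every $i\in L$ (in particular $L\ne\varnothing$, otherwise part a) asserts nothing). The goal is then to manufacture a fault set $F'$ with $\mnorm{F'}\le f$ and a reduced graph $G_{F'}$ whose decomposition contains a source component of size strictly less than $n-2f$, contradicting Condition~B.

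The construction moves part of $L$ into the fault set. I would pick $L_1\subseteq L$ with $\mnorm{L_1}=\min\{f-\mnorm{F},\,\mnorm{L}-1\}$ and set $F'=F\cup L_1$, $L'=L\setminus L_1$, $R'=R$. Then $\mnorm{F'}\le f$, and $\mnorm{L'}=\max\{n-\mnorm{R}-f,\,1\}$, which is at least $1$ and, using $\mnorm{R}\ge f+1$ together with $n\ge 2f+2$, strictly less than $n-2f$. I then build $G_{F'}$ by deleting all edges incident to $F'$ and, at each node $\ell\in L'$, additionally deleting all of its incoming edges coming from $R'$ --- there are $\mnorm{N_\ell^-\cap R'}\le\mnorm{N_\ell^-\cap R}\le f$ of them, so this respects the budget of $f$ extra deletions per node --- while deleting nothing further at nodes of $R'$. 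This $G_{F'}$ is a legitimate reduced graph per Definition~\ref{def:reduced-graph}, and by construction it has no edge from $R'$ to $L'$.

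It remains to exhibit the small source component. Since $G_{F'}$ has no $R'\to L'$ edge, no strongly connected component can meet both $L'$ and $R'$, and any directed path between two nodes of $L'$ must stay inside $L'$; hence the strongly connected components of $G_{F'}$ contained in $L'$ are exactly the components of the induced subgraph $G_{F'}[L']$. I would pick a source component $C$ of $G_{F'}[L']$, which exists because $L'\ne\varnothing$. Then $C$ is a source component of $G_{F'}$ itself: a component inside $R'$ cannot reach $C$ (that would require an $R'\to L'$ edge) and a component inside $L'$ cannot reach $C$ either (the reaching path would stay in $L'$, contradicting that $C$ is a source component of $G_{F'}[L']$). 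Thus $\mnorm{C}\le\mnorm{L'}<n-2f$, contradicting Condition~B. The case where part b) of Condition~A is violated is handled by the mirror-image argument: move a subset $R_1\subseteq R$ into $F'$ and delete the $L\to R'$ edges, producing a source component inside $R'$ of size less than $n-2f$.

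The step I expect to demand the most care is the budget bookkeeping in the construction: $L_1$ must be chosen so that $\mnorm{F'}\le f$, $L'$ stays nonempty (so that it genuinely contains a source component), and $\mnorm{L'}<n-2f$, all at once. This is exactly where $n\ge 2f+2$ and the inequality $\mnorm{L}+\mnorm{F}=n-\mnorm{R}\le n-f-1$ (from $\mnorm{R}\ge f+1$) are used, and the choice $\mnorm{L_1}=\min\{f-\mnorm{F},\,\mnorm{L}-1\}$ is engineered precisely to thread this needle. A secondary, more routine point is the decomposition argument that a source component of $G_{F'}[L']$ is still a source component of the full reduced graph $G_{F'}$, which rests entirely on the fact that the deletions remove all $R'\to L'$ edges.
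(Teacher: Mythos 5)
Your proof is correct and follows essentially the same route as the paper's: assume part a) of Condition A is violated, absorb part of $L$ into the fault set so that the surviving $L'$ is nonempty but smaller than $n-2f$, and delete the at most $f$ incoming edges from $R'$ at each node of $L'$ to obtain a legitimate reduced graph in which $L'$ receives nothing from $R'$. Your write-up is in fact slightly more careful than the paper's on one point: the paper asserts that $L'$ itself ``is a source component,'' which tacitly assumes $L'$ is strongly connected, whereas you correctly pass to a source component of the induced subgraph $G_{F'}[L']$ and verify that it remains a source component of $G_{F'}$.
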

\begin{proof}
    The proof is by contradiction. Suppose that Condition B holds for graph G, and part a) of Condition A is not true. That is, there exists a partition of graph $G(\V,\mathcal{E})$ into $L$, $R$, $F$ with $\mnorm{F}\leq f$, $\mnorm{R}\geq f+1$, and for any node $i\in L$, $\mnorm{N_i^-\cap R}\leq f$. 

    Consider the following transformation over $L$, $R$, and $F$: identify the sets $L_1\subset L$ and $R_1\subset R$ such that $L\backslash L_1\neq\varnothing$, $\mnorm{R\backslash R_1}\geq f+1$, and $\mnorm{L_1\cup R_1\cup F}=f$, and let $L'=L\backslash L_1$, $R'=R\backslash R_1$, and $F'=F\cup L_1\cup R_1$. $L'$, $R'$, $F'$ is also a partition of $G$, with $\varnothing\neq L'\subseteq L$, $R'\subseteq R$ and $F\subseteq F'$, $|R'|\geq f+1$, $|F|=f$, and for each $i$ in $L'$, $\mnorm{N_i^-\cap{R'}}\leq f$. Note that this transformation is always feasible because $n\geq 2f+2$. And we also have $\mnorm{L'}=n-\mnorm{F'}-\mnorm{R'}\leq n-f-(f+1)=n-2f-1<n-2f$.
    


    Since $L'$, $R'$, $F'$ form a partition of $G$, consider the reduced graph $G_{F'}$ of $G$, constructed by first removing all nodes in $F'$ from $\V$ and all edges incident to them in $\mathcal{E}$, then removing all incoming edges from nodes in $R'$ to every node in $L'$. Since for all $i\in L'$, $\mnorm{N_i^-\cap R'}\leq f$, the above steps indeed create a reduced graph of $G$ per Definition~\ref{def:reduced-graph}. Note that $L'$ has no incoming edge in $G_{F'}$ and therefore is a source component. However, its size is strictly less than $n-2f$, which contradicts Condition B. Hence, the proof.
\end{proof}

Combining Lemmas~\ref{lemma:source-component-equivalent-1} and~\ref{lemma:source-component-equivalent-2} in this section, we have shown the following:
\begin{theorem}
    Condition A and Condition B are equivalent.
\end{theorem}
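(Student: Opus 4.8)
The plan is to assemble the equivalence from the two one-directional implications already proved in this subsection, so the ``proof'' is essentially a bookkeeping step rather than a new argument. Recall that Condition~A and Condition~B are both properties of the fixed communication graph $G(\V,\mathcal{E})$ (and of the parameter $f$), so establishing ``$A \Leftrightarrow B$'' amounts to exhibiting both implications under a common hypothesis. First I would invoke Lemma~\ref{lemma:source-component-equivalent-1}: assuming $\mnorm{\V}\geq 2f+2$, Condition~A forces every source component of every reduced graph $G_F$ ($\mnorm{F}\leq f$) to have size at least $n-2f$, which is exactly Condition~B. Then I would invoke Lemma~\ref{lemma:source-component-equivalent-2}, which gives the converse under the same cardinality hypothesis: if some partition $L,R,F$ violated part~a) of Condition~A, the $L_1,R_1$ transformation produces a reduced graph $G_{F'}$ in which $L'$ is a source component of size strictly below $n-2f$, contradicting Condition~B; part~b) is symmetric.

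The only point needing care is the standing assumption $\mnorm{\V}\geq 2f+2$: both Lemma~\ref{lemma:source-component-equivalent-1} and Lemma~\ref{lemma:source-component-equivalent-2} require it (the transformation in the latter, which needs room to grow $F$ up to size $f$ while keeping $L'$ nonempty and $\mnorm{R'}\geq f+1$, fails otherwise, and the regime $n\leq 2f+1$ is degenerate as already noted). So I would state the theorem as operating in that regime, matching the hypotheses of Theorems~\ref{thm:necessity-set-intersection} and~\ref{thm:sufficiency-set-intersection}, and then simply chain the two lemmas.

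I do not expect any genuine obstacle here: both directions are in hand, and the argument reduces to ``$(A\Rightarrow B)$ and $(B\Rightarrow A)$, hence $A\Leftrightarrow B$.'' If anything, the mildly delicate part is making sure the $n\geq 2f+2$ qualification is carried through consistently (and, optionally, remarking via Corollary~\ref{corollary:1} and Lemma~\ref{lemma:source-component-equivalent-1-alt} on the sharper bound $n-\phi-f$ on source-component size when $\mnorm{F}=\phi<f$, though this is not needed for the equivalence itself). The one-line proof is therefore: combine Lemmas~\ref{lemma:source-component-equivalent-1} and~\ref{lemma:source-component-equivalent-2}.
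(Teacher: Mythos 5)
Your proposal is correct and matches the paper exactly: the theorem is obtained by combining Lemmas~\ref{lemma:source-component-equivalent-1} and~\ref{lemma:source-component-equivalent-2}, under the standing hypothesis $\mnorm{\V}\geq 2f+2$. Your added remark about carrying the $n\geq 2f+2$ qualification through is a reasonable (and implicit in the paper) point of care, but nothing in your argument departs from the paper's own one-line proof.
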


\subsection{Decentralized unconstrained algorithms}
\label{sub:unconstrained}


In this section, we remove the constraints over the algorithms discussed in the previous section, that is, the algorithm can maintain more information than a local set. As such, the requirement of Validity and Convergence are also removed. Instead, suppose $F\subset\V$ is the set of Byzantine agents with $\mnorm{F}\leq f$, an $f$-resilient set intersection algorithm must satisfy the following criteria:
\begin{itemize}
    \item \textbf{Correctness}: Each agent $i\in\V\backslash F$ will decide on $X_{\V\backslash F}$ as its output in finite time.
\end{itemize}

Let us denote by $G-S$ with $S\subseteq \V$ the graph by removing all nodes in $S$ and all edges containing nodes in $S$ from $G$. We will use the definition of connectivity of a directed graph in the conditions, stated as follows:

\begin{definition}[$k$-connectivity \cite{west2001introduction}] 
    A \emph{separating set} or a \emph{vertex cut} of a directed graph $G$ is a set $S\subseteq\V$ such that the graph $G-S$ is not strongly connected. A directed graph $G$ is \emph{$k$-connected} if every separating set has at least $k$ vertices.
\end{definition}







\begin{theorem}[Necessity]
    Suppose $2f$-redundancy is satisfied, and $n\geq 2f+2$. An $f$-resilient set intersection algorithm exists only if the communication graph $G$ is $(2f+1)$-connected. 
\end{theorem}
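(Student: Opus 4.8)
The plan is to prove the contrapositive: if $G$ is not $(2f+1)$-connected, then no $f$-resilient set intersection algorithm exists, even when the inputs are promised to satisfy $2f$-redundancy. By the definition of $k$-connectivity, $G$ has a separating set $S$ with $\mnorm{S}\le 2f$, so $G-S$ is not strongly connected; hence there is a nonempty $A\subsetneq\V\setminus S$ with no edge of $G$ entering $A$ from $\V\setminus S\setminus A$, and $B:=\V\setminus S\setminus A\ne\varnothing$. In particular every node of $A$ has all of its incoming neighbours in $A\cup S$. Using $n\ge 2f+2$ I would first \emph{enlarge the cut}: pick $T\subseteq A$ with $\mnorm{T}=\max\{0,\ \mnorm{A}+2f+1-n\}$ and set $S'=S\cup T$, $A'=A\setminus T$. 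A short count from $\mnorm{S}\le 2f$, $\mnorm{B}\ge 1$ and $n\ge 2f+2$ gives $A'\ne\varnothing$, $\mnorm{S'}\le 2f$, and $\mnorm{A'}\le n-2f-1$; moreover $A'\cup S'=A\cup S$, so nodes of $A'$ still hear only from $A'\cup S'$, there is still no edge from $B$ into $A'$, and $\mnorm{\V\setminus A'}\ge 2f+1$. Fix $a_0\in A'$ and let $A_0'\subseteq A'$ be the set of nodes having a directed path to $a_0$ inside $G[A']$ (so $a_0\in A_0'$, $A_0'$ hears only from $A_0'\cup S'$, and $\mnorm{A_0'}\le n-2f-1$).

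The core is a two-execution indistinguishability argument. Split $S'=S_1'\sqcup S_2'$ with $\mnorm{S_1'},\mnorm{S_2'}\le f$. Fix two distinct points $y,z$. In execution $E_1$ the faulty set is $S_1'$ and every agent's input is $\{y,z\}$; then $\bigcap_{i\in\V}X_i=\{y,z\}$, so $2f$-redundancy holds, and Correctness forces each non-faulty agent, in particular $a_0$, to decide $\bigcap_{i\in\V\setminus S_1'}X_i=\{y,z\}$. In execution $E_2$ the faulty set is $S_2'$, and the input is $\{y,z\}$ for $i\in A'$ and $\{z\}$ for $i\in\V\setminus A'$; since $\mnorm{\V\setminus A'}\ge 2f+1$ agents omit $y$ and $\bigcap_{i\in\V}X_i=\{z\}\ne\varnothing$, $2f$-redundancy again holds, and since $B\subseteq\V\setminus S_2'$ consists of agents whose input lacks $y$, Correctness forces $a_0$ to decide $\bigcap_{i\in\V\setminus S_2'}X_i=\{z\}$. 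Now the faulty agents glue the runs: in $E_1$ the (faulty) agents $S_1'$ behave exactly as the honest agents $S_1'$ do in $E_2$, and in $E_2$ the (faulty) agents $S_2'$ behave exactly as the honest agents $S_2'$ do in $E_1$. An induction on the round index shows the two runs are jointly well defined (each referencing the other only at strictly earlier rounds), that the messages crossing from $S'$ into $A'$ coincide in $E_1$ and $E_2$, and therefore — since the inputs of the nodes of $A_0'$ agree in the two runs and the only other influence on $A_0'$ comes from within $A_0'$ and from $S'$ — that $a_0$ sees an identical transcript in $E_1$ and $E_2$. But then $a_0$ would have to decide $\{y,z\}$ and also $\{z\}$ on the same transcript, a contradiction. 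Hence no $f$-resilient algorithm exists, proving the contrapositive.

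The step I expect to demand the most care is the gluing: one must specify the faulty agents' behaviour on all their out-edges (not only those toward $A'$), check that the round-by-round definitions of $E_1$ and $E_2$ are non-circular, and verify indistinguishability at $a_0$ in every round — including that the differing inputs of $B$, which propagate through the honest half of $S'$ in one run, do not leak to $a_0$, precisely because that same half of $S'$ is faulty (and replays the other run) in the other run. The cut-enlargement step is elementary but essential: it is exactly what reconciles the $2f$-redundancy demand that at least $2f+1$ agents omit $y$ with the need to keep $y$ in the inputs of the ``confused'' side $A'$, and it is the only place the hypothesis $n\ge 2f+2$ enters.
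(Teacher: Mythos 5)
Your proof is correct and takes essentially the same route as the paper's: exhibit a separating set of size at most $2f$, split it into two halves of size at most $f$, and build two executions satisfying $2f$-redundancy that an agent on the cut-off side cannot distinguish, forcing contradictory outputs. Your explicit cut-enlargement and round-by-round gluing just make rigorous two steps the paper treats more informally (the paper fixes a cut $T$ of size exactly $2f$ and takes $T\cup\{r\}$ as the $2f+1$ agents missing $y$).
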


\begin{proof}
    The proof is by contradiction. Suppose an $f$-resilient set intersection algorithm exists and $2f$-redundancy is satisfied on a graph that is not $(2f+1)$-connected. That is, there exists a subset of nodes $T\subset\V$ with $\mnorm{T}=2f$, such that $G-T$ is not strongly connected. Thus, there exist two nodes $r,l\in G-T$, such that there is no directed path in $G-T$ from $r$ to $l$. Since $n\geq 2f+2$, this is always possible. $T$ is a separating set of $G$. 
    
    Let us denote an arbitrary subset $F\subset T$ with $\mnorm{F}=f$, $R=(T\backslash F)\cup\{r\}$, and $L=\V\backslash(F\cup R)$. $L,R,F$ form a partition of $G$. Note that since $n\geq 2f+2$, $\mnorm{L}\geq 1$, and $\mnorm{R}=\mnorm{T\backslash F}+1=f+1$. Let us further denote $R_0=T\backslash F$. 

    Consider input sets such that 
    for each agent $i\in T\cup\{r\}$, $y\notin X_i$. Since $\mnorm{T}=2f$, there are exactly $2f+1$ agents whose input sets do not have $y$. Also, for each agent $i\in L$, $y\in X_i$.

    Recall that $l$ can only learn information of $r$ via agents in $T$. Note that there are at most $f$ agents in $R_0$ that have a path to $l$, and at most $f$ agents in $F$ that have a path to $l$. Suppose that in an execution, all agents in $F$ are faulty, and any faulty agent $i\in F$ will always behave as if all agents in $F\cup L\cup\{r\}$ have $y$ in their input sets.
    The agent $l$ cannot distinguish the following two scenarios:
    \begin{enumerate}[nosep, label=\roman*)]
        \item The actual situation, where agents in $F$ are faulty, all other agents are non-faulty, and $y\notin X_i$ for $i\in R$ and for each agent $i\in L$, $y\in X_i$. 
        In this case, $y$ is not in $\bigcap_{i\in\V\backslash F}X_i$, the intersection of all non-faulty input sets, and should be removed from $X_l$.
        \item Another possible situation from $l$'s perspective, where the agents in $R_0$ are faulty, the remaining agents are non-faulty, and for each non-faulty agent $i$, $y\in X_i$. 
        In this case, $y$ is in $\bigcap_{i\in\V\backslash R_0}X_i$, which is the intersection of all non-faulty input sets.
    \end{enumerate}
    Note that both scenarios are compatible with the $2f$-redundancy property: in Scenario i), there are exactly $2f+1$ agents whose input sets do not have $y$, while in Scenario ii), all the non-faulty agents have $y$ in their input sets.

    For agent $l$, if its output does not have $y$, the output would not be correct in the second scenario; if its output has $y$, the output would not be correct in the first scenario. Therefore, the output of agent $l$ cannot be guaranteed to be correct in all executions. This is a contradiction with the assumption that an $f$-resilient algorithm exists. Hence, the proof.
\end{proof}

To show sufficiency, we define a \textit{directed $(x,y)$-path} for $x,y\in\V$ to be a series of edges $(x,p_1),...,(p_k,y)$, such that $p_1,...,p_k\in\V$ ($k\geq 1$) are different nodes. Also, two $(x,y)$-paths are called \textit{internally vertex-disjoint} if $x$ and $y$ are their only common nodes. We will use the following version of Menger's Theorem:

\begin{theorem}[Menger's Theorem \cite{bondy1976graph}] 
    Let $x$ and $y$ be two vertices of a directed graph $G$, such that there is no edge in $G$ from $x$ to $y$. Then the maximum number of internally vertex-disjoint directed $(x,y)$-paths in $G$ is equal to the minimum number of vertices whose deletion destroys all directed $(x,y)$-paths in $G$.
\end{theorem}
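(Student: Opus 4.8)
This is the classical internal-vertex form of Menger's theorem for digraphs, and the cited reference \cite{bondy1976graph} already contains a full proof; what I would present is the standard derivation from the max-flow/min-cut theorem via vertex splitting. Throughout, ``vertices whose deletion destroys all directed $(x,y)$-paths'' should be read in the usual Menger sense as a set of vertices \emph{other than $x$ and $y$} (otherwise $\{x\}$ alone would trivially be such a set), and the reduction below makes that reading automatic.

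\textbf{Step 1: build an auxiliary capacitated digraph $G'$.} First I would replace every vertex $v \notin \{x,y\}$ by an ``entry'' copy $v_{\mathrm{in}}$ and an ``exit'' copy $v_{\mathrm{out}}$, joined by an arc $v_{\mathrm{in}}\to v_{\mathrm{out}}$ of capacity $1$. Every arc $(u,v)$ of $G$ becomes an arc of capacity $+\infty$ from the exit copy of $u$ (or from $x$ itself, if $u=x$) to the entry copy of $v$ (or to $y$ itself, if $v=y$). Since $G$ has no arc $x\to y$, every directed $x$--$y$ path in $G'$ traverses at least one of the unit-capacity arcs $v_{\mathrm{in}}\to v_{\mathrm{out}}$.

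\textbf{Step 2: two translations between $G$ and $G'$.} (a) Internally vertex-disjoint $(x,y)$-paths correspond to integral $x$--$y$ flows: a family of $p$ internally disjoint $(x,y)$-paths in $G$ uses each vertex $v\neq x,y$ at most once, so it induces an integral $x$--$y$ flow of value $p$ in $G'$; conversely, flow decomposition turns an integral $x$--$y$ flow of value $p$ in $G'$ into $p$ arc-disjoint $x$--$y$ paths, and arc-disjointness across the arcs $v_{\mathrm{in}}\to v_{\mathrm{out}}$ forces the underlying paths in $G$ to be internally vertex-disjoint. (b) Vertex sets destroying all $(x,y)$-paths correspond to finite-capacity edge cuts: a set $Z$ of internal vertices with no $(x,y)$-path in $G-Z$ gives the edge cut $\{\,v_{\mathrm{in}}\to v_{\mathrm{out}} : v\in Z\,\}$ of capacity $|Z|$; conversely any finite-capacity $x$--$y$ edge cut of $G'$ can be taken to use only such unit arcs (an $+\infty$-capacity arc can never lie in a finite cut), hence equals $\{\,v_{\mathrm{in}}\to v_{\mathrm{out}} : v\in Z\,\}$ for some internal vertex set $Z$ that destroys all $(x,y)$-paths in $G$.

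\textbf{Step 3: invoke max-flow/min-cut and translate back.} Finally I would apply the max-flow/min-cut theorem to $G'$ with source $x$ and sink $y$: the maximum value of an integral $x$--$y$ flow equals the minimum capacity of an $x$--$y$ edge cut, which is finite (bounded by the number of internal vertices). Reading the left-hand side through translation (a) and the right-hand side through translation (b) yields precisely the claimed equality. The part that needs the most care is the boundary bookkeeping of Step 2 --- keeping $x$ and $y$ unsplit so they are never counted in a cut, and the observation that a minimum cut can avoid the $+\infty$-capacity arcs --- together with appealing to the \emph{integrality} of a maximum flow, which is what lets one pull an honest family of disjoint paths out of the optimal flow. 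A fully self-contained alternative, avoiding flow theory entirely, is the textbook induction on the number of arcs of $G$ (remove one arc $e$; either a minimum $(x,y)$-vertex cut of $G-e$ already has the target size, or one patches together minimum cuts ``through'' the two endpoints of $e$ to exhibit enough disjoint paths), at the cost of a longer case analysis.
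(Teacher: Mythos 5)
The paper does not prove this statement at all --- it is quoted verbatim as a classical result with a citation to \cite{bondy1976graph}, so there is no in-paper argument to compare against. Your sketch is the standard and correct max-flow/min-cut derivation via vertex splitting: the unit-capacity arcs $v_{\mathrm{in}}\to v_{\mathrm{out}}$ on internal vertices, the $+\infty$ capacities on original arcs, the hypothesis that there is no arc $x\to y$ guaranteeing a finite cut, and the appeal to integrality to recover genuinely disjoint paths are all exactly the points that need care, and you handle each of them. The one place a referee might ask you to say a word more is in translation (a): flow decomposition of an integral flow can a priori produce closed walks or paths revisiting split-vertex pairs, so one should note that cycles in the decomposition can be discarded without changing the flow value, after which the capacity-one constraint on each $v_{\mathrm{in}}\to v_{\mathrm{out}}$ arc forces the surviving paths to be internally vertex-disjoint in $G$. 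With that small remark the argument is complete and entirely appropriate as a substitute for the textbook citation.
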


\begin{theorem}[Sufficiency]
    Suppose $2f$-redundancy is satisfied, and $n\geq 2f+2$. An $f$-resilient set intersection algorithm exists if the communication graph $G$ is $(2f+1)$-connected.
\end{theorem}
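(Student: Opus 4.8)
The plan is to give a constructive algorithm that exploits $(2f+1)$-connectivity. The key observation is that, since the algorithm is now unconstrained, each non-faulty agent can attempt to reconstruct the \emph{entire} collection of input sets, identify which ones are trustworthy, and compute the intersection directly. The central tool is reliable message transmission between any two non-faulty agents: I claim that if $G$ is $(2f+1)$-connected, then for any ordered pair of non-faulty agents $u,v$, agent $v$ can correctly learn agent $u$'s input set $X_u$. Indeed, add a virtual edge convention or handle the adjacent case separately; when there is no edge from $u$ to $v$, Menger's Theorem guarantees at least $2f+1$ internally vertex-disjoint directed $(u,v)$-paths, since any vertex cut separating $u$ from $v$ has size at least $2f+1$ (a cut of size $2f$ would leave $G$ minus that cut not strongly connected). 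Agent $u$ sends $X_u$ along all $2f+1$ disjoint paths; at most $f$ of these paths contain a faulty internal node, so at least $f+1$ paths deliver $X_u$ faithfully, and since any two distinct values can agree on at most $f$ paths, $v$ recovers $X_u$ by majority. (Paths of length one, i.e. direct edges, are handled trivially; the disjoint-paths count is taken over paths avoiding the direct edge, plus the direct edge itself when present.)

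Given reliable end-to-end transmission, the algorithm is: every agent $i$ floods its input set $X_i$ to all other agents via the disjoint-path scheme above; after this completes, every non-faulty agent $i$ holds a value $\widehat{X}_j^{(i)}$ for each $j\in\V$, with $\widehat{X}_j^{(i)} = X_j$ whenever $j$ is non-faulty (values reported by faulty $j$ may be arbitrary but are at least consistently received by all non-faulty agents if we insist the faulty node's ``input'' is whatever it chooses to send — actually a faulty $j$ may send different things along different paths, so $\widehat{X}_j^{(i)}$ need not agree across $i$; this is fine, as the next step tolerates it). Each agent then computes its output as follows: for each point $y$, count the number of indices $j$ with $y\notin\widehat{X}_j^{(i)}$; if this count is at least $2f+1$, exclude $y$ from the output; otherwise include $y$. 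Equivalently, output $\bigcap_{T}\bigcap_{j\in T}\widehat{X}_j^{(i)}$ over all $T$ of size $n-2f$, or use the Property C characterization directly.

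Correctness follows from Property C of $2f$-redundancy. Fix a non-faulty agent $i$ and the true fault set $F$. For any $y\in X_{\V\backslash F} = \bigcap_{k\in\V\backslash F}X_k$: the only indices $j$ with $y\notin\widehat{X}_j^{(i)}$ are faulty ones, so the count is at most $f < 2f+1$, and $y$ is correctly included. Conversely, for any $y\notin X_{\V\backslash F}$: by Property C there are at least $2f+1$ agents $k$ with $y\notin X_k$; removing the at most $f$ faulty ones leaves at least $f+1$ non-faulty agents $k$ with $y\notin X_k = \widehat{X}_k^{(i)}$, and we must push the count to $2f+1$ — but here a subtlety arises, since we only have $f+1$ guaranteed, not $2f+1$. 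The fix is that we should instead count agents $j$ with $y\in\widehat{X}_j^{(i)}$ and include $y$ only if that count is at least $n-2f$: for $y\in X_{\V\backslash F}$ all $\geq n-f$ non-faulty agents report $y$, so the count is $\geq n-f \geq n-2f$; for $y\notin X_{\V\backslash F}$, the non-faulty reporters of $y$ number at most $(n-|F|) - (2f+1-|F|) = n-2f-1 < n-2f$, and faulty agents add at most $|F|\le f$ more, giving at most $n-f-1 < n-2f$ only when... this still needs $f < f$. The genuinely correct rule, and the one I would use, is the $T$-based rule from the centralized algorithm: output the common value $\bigcap_{j\in T}\widehat{X}_j^{(i)}$ for a size-$(n-f)$ set $T$ all of whose size-$(n-2f)$ subsets yield equal intersections; such $T$ exists (take $T\supseteq\V\backslash F$-portion) and, as in the centralized proof, its intersection equals $\bigcap_{j\in\V\backslash F}X_j$ because every size-$(n-2f)$ subset of $\V\backslash F$ pins down the right answer by Property B.

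The main obstacle I anticipate is precisely this last bookkeeping: faulty agents can equivocate on their reported input sets, so the naive ``count points'' rule does not directly work, and one must either lift the centralized consistency-check argument to the decentralized setting (which is clean once reliable transmission is in hand) or argue more carefully about which reported sets can be trusted. Establishing the reliable-transmission primitive from Menger's Theorem is routine given the stated connectivity, modulo the cosmetic issue of the ``no edge from $x$ to $y$'' hypothesis in the quoted form of Menger, which is handled by treating direct neighbors as a trivial base case.
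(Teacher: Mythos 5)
Your proof is correct in its final form, and its backbone --- reliable end-to-end transmission of every input set along $2f+1$ internally vertex-disjoint paths guaranteed by Menger's Theorem, with at most $f$ paths corruptible and recovery by taking the value delivered identically on at least $f+1$ of them --- is exactly the paper's Step 1. Where you diverge is the local decision rule applied to the reconstructed sets $\widehat{X}_j^{(i)}$. You correctly diagnose that the two thresholds you try first (exclude $y$ when at least $2f+1$ reported sets omit it; include $y$ when at least $n-2f$ reported sets contain it) both fail, and you then fall back to running the centralized consistency-check rule locally. That fallback is valid: a set $T$ of size $n-f$ passing the check exists (any $n-f$ non-faulty agents, by Property B), and for any passing $T$ the subset $T\cap(\V\backslash F)$ is a non-faulty set of size at least $n-2f$, which pins $\bigcap_{j\in T}\widehat{X}_j^{(i)}$ to $\bigcap_{i\in\V\backslash F}X_i$ exactly as in the centralized proof. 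The paper instead uses the one threshold you skipped over: include $y$ if and only if at most $f$ of the reported sets exclude it. The reason this works is that the gap lives in the excluder count, not the includer count: for $y\in\bigcap_{i\in\V\backslash F}X_i$ only faulty agents can report sets omitting $y$, so there are at most $f$ excluders; for $y\notin\bigcap_{i\in\V\backslash F}X_i$, Property C gives at least $2f+1$ agents omitting $y$, of which at least $f+1$ are non-faulty and report honestly, so there are at least $f+1$ excluders no matter what the faulty agents claim. The separation is at $f$ versus $f+1$, not $2f$ versus $2f+1$, and this one-line counting rule replaces your search over candidate sets $T$. Both arguments are sound; the paper's is the sharper instance of the counting you were attempting.
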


\begin{proof}
    Consider the following algorithm:
    \begin{enumerate}[label=\textbf{Step \arabic*}]
        
        \item Each agent $i$ sends its local set $X_i$ to all its outgoing neighbors, and at least $2f+1$ copies via internally vertex-disjoint paths to all other agents in the graph. The later half of this step is possible by $(2f+1)$-connectivity and Menger's Theorem.

        For each agent $i$, receive set $Z_j$ sent from each agent $j\in\V$ other than $i$:
            \begin{itemize}
                \item If $j$ is an incoming neighbor of $i$, store the set it receives directly from $j$ as $Z_j$.
                \item Or, if $f+1$ identical copies of a set are received from agent $j$ via internally disjoint paths, store the set as $Z_j$.
                \item Otherwise, 
                that agent $j$ must be Byzantine. In this case, store $Z_j=\varnothing$.
            \end{itemize}
        \item For each agent $i$,
            initialize $O_i\leftarrow\varnothing$. For every value $y\in X_i$, if there are at most $f$ agents $k$ such that $y\notin Z_k$, add $y$ to $O_i$. Output $O_i$.
    \end{enumerate}

    Note that the algorithm is synchronous, and we assumed that the $2f$-redundancy is satisfied. We now show the algorithm is $f$-resilient.

    We first show the correctness of Step 1, that is, agent $i$ can receive and determine the correct input set of every non-faulty agent $j$. What set $Z_k$ is stored at $i$ for any Byzantine agent $k$ is irrelevant and will be handled in Step 2. For any non-faulty agent $j$, by Menger's Theorem, since the communication graph is $(2f+1)$-connected, either (i) $(j,i)\in\mathcal{E}$ and $Z_j=X_j$, or (ii) there are at least $2f+1$ messages from $j$ via internally vertex-disjoint paths are received by $i$, and since there are up to $f$ Byzantine agents, at most $f$ messages may be corrupted, and at least $f+1$ messages are delivered correctly, each containing an identical copy of $X_j$, which is stored as $Z_j$. Byzantine agents may behave 
    incorrectly, so $Z_k$ for faulty agent $k$ may not be its correct input set.

    After Step 1, each non-faulty agent $i$ stores in $Z_j$ the correct local set of agent $j$ if $j$ is non-faulty, and a potentially arbitrary set if agent $j$ is Byzantine faulty. Let $F$ be the set of all Byzantine agents with $\mnorm{F}\leq f$. Now we show that Step 2 outputs the correct intersection of all non-faulty agents. Consider any non-faulty agent $i$. For any possible value $y$,
    \begin{enumerate}[nosep, label=(\roman*)]
        \item If $y\notin X_i$, since $i$ is non-faulty, $y\notin\bigcap_{i\in\V\backslash F}X_i$. By Step 2, $y$ will not be added to $O_i$.
        \item If $y\in X_i$, and $y\notin\bigcap_{i\in\V\backslash F}X_i$. By $2f$-redundancy, 
        there are at least $2f+1$ agents $k$ with $y\notin X_k$. Since there are up to $f$ Byzantine agents, whose input set is unknown to agent $i$, there should be at least $f+1$ agents $k$ with $y\notin Z_k$. This is true because $n\geq2f+2$. By Step 2, $y$ will not be added to $O_i$.
        \item If $y\in X_i$, and $y\in\bigcap_{i\in\V\backslash F}X_i$. For any non-faulty agent $j$, $Z_j=X_j$ contains $y$. Therefore, there are at most $f$ agents $k$ with $y\notin Z_k$. By Step 2, $y$ will be added to $O_i$.
    \end{enumerate}
    Therefore, $O_i$ contains and only contains values in $\bigcap_{i\in\V\backslash F}X_i$. Hence the output is correct for all non-faulty agents, and the algorithm is $f$-resilient.
\end{proof}

\subsection{Asynchronous constrained and unconstrained algorithms}

In the previous part of this section, we discussed the necessary and sufficient conditions for a synchronous $f$-resilient set intersection algorithm to exist when $2f$-redundancy is satisfied. In this part, we present without providing detailed proofs, the necessary and sufficient conditions for an asynchronous $f$-resilient algorithm to exist with redundancy.

For asynchronous systems, it can be shown that $3f$-redundancy is necessary to solve the set intersection problem correctly:
\begin{definition}[$3f$-redundancy]
    The local sets of a set $\V$ of $n$ agents are said to satisfy $3f$-redundancy, if $\bigcap_{i\in\V}X_i\neq\varnothing$, and for any point $y\notin\bigcap_{i\in\V}X_i$, there exist at least $3f+1$ agents $i$, such that $y\notin X_i$.
\end{definition}

For constrained algorithms discussed in Section~\ref{sub:constrained}, in asynchronous systems, the following condition is necessary and sufficient:

\begin{theorem}
    Suppose $3f$-redundancy is satisfied, and $n\geq 3f+2$. A constrained asynchronous set intersection algorithm exists if for any partition of $G$ into $L, R, F$, $\mnorm{F}\leq f$, the following conditions hold:
    \begin{enumerate}[ label=\alph*)]
        \item If $\mnorm{R}\geq 2f+1$, there exists an agent in $i\in L$, $\mnorm{N_i^-\cap R}\geq 2f+1$, and
        \item If $\mnorm{L}\geq 2f+1$, there exists an agent in $i\in R$, $\mnorm{N_i^-\cap L}\geq 2f+1$.
    \end{enumerate}
\end{theorem}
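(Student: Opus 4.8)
The plan is a constructive proof that parallels the sufficiency argument for the synchronous constrained case (Theorem~\ref{thm:sufficiency-set-intersection}), with the synchronous rounds replaced by an asynchronous phase structure in which each agent waits for ``all but $f$'' of its incoming neighbors. Each agent $i$ maintains $X_i^t$ with $X_i^0=X_i$; in phase $t$ it sends $X_i^t$ (tagged with the phase number $t$) to every outgoing neighbor, waits until it has received phase-$t$ sets $Z_k^t$ from $\mnorm{N_i^-}-f$ distinct incoming neighbors $k$, and then sets $X_i^{t+1}\leftarrow X_i^t\backslash Y_i^t$, where $Y_i^t$ collects every $y\in X_i^t$ for which more than $f$ of the received sets $Z_k^t$ omit $y$. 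Byzantine agents may send arbitrary sets or remain silent.

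Before analyzing the algorithm I would record two structural consequences of the hypotheses. First, because $n\ge 3f+2$, for any single agent $i_0$ one may choose an $f$-set $F$ avoiding $i_0$ and form the partition $L=\{i_0\}$, $R=\V\backslash(L\cup F)$ with $\mnorm{R}=n-1-f\ge 2f+1$; condition a) then forces $\mnorm{N_{i_0}^-\cap R}\ge 2f+1$, so \emph{every} agent has in-degree at least $2f+1$ and the quantity $\mnorm{N_i^-}-f\ge f+1$ is positive, making the wait well defined. Second, the phases cannot deadlock: each non-faulty $i$ has at least $\mnorm{N_i^-}-\mnorm{N_i^-\cap F}\ge\mnorm{N_i^-}-f$ non-faulty incoming neighbors, and by induction on the phase number each of them reaches phase $t$ and sends a phase-$t$ message that is eventually delivered, so every non-faulty agent completes infinitely many phases.

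Correctness splits into three parts. Validity is immediate from $X_i^{t+1}\subseteq X_i^t$. For ``no good element is removed'', fix $y\in\bigcap_{i\in\V\backslash F}X_i$ and prove by induction on $t$ that $y\in X_i^t$ for every non-faulty $i$: a phase-$t$ message from a non-faulty neighbor is that neighbor's current set, which contains $y$ by the inductive hypothesis, so at most $\mnorm{N_i^-\cap F}\le f$ received sets omit $y$, which is never ``more than $f$''. For ``every bad element is removed'', fix $y\notin\bigcap_{i\in\V\backslash F}X_i$; then $y\notin\bigcap_{i\in\V}X_i$ as well, so by $3f$-redundancy at least $3f+1$ agents lack $y$, hence the set $L$ of non-faulty agents whose current set lacks $y$ satisfies $\mnorm{L}\ge 2f+1$ at all times (once $y$ is removed it never returns, by validity). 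If $R:=(\V\backslash F)\backslash L$ is non-empty, apply condition b) to the partition $(L,\V\backslash(L\cup F),F)$ to obtain $r\in R$ with $\mnorm{N_r^-\cap L}\ge 2f+1$; each of those $\ge 2f+1$ neighbors eventually, and thereafter permanently, sends sets lacking $y$, and $r$ is missing exactly $f$ messages in any phase it completes, so $r$ eventually completes a phase in which at least $(2f+1)-f=f+1>f$ of its received sets lack $y$, whereupon $r$ deletes $y$ and enters $L$. Each such step enlarges $L$, so after at most $\mnorm{\V}-2f-1$ steps $R$ becomes empty; since this happens for every $y\notin\bigcap_{i\in\V\backslash F}X_i$ and no good element is ever lost, each non-faulty agent's set equals $X_{\V\backslash F}$ from some finite phase onward.

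I expect the main obstacle to be exactly what forces the graph thresholds and the redundancy level each to grow by $f$ relative to the synchronous Theorems~\ref{thm:necessity-set-intersection}--\ref{thm:sufficiency-set-intersection}: an asynchronous adversary can withhold up to $f$ phase-$t$ messages from slow non-faulty neighbors, so a decision rests on only $\mnorm{N_i^-}-f$ sets, and guaranteeing that $f+1$ ``$y$-absent'' votes survive to outweigh the $f$ possible Byzantine ``$y$-present'' votes requires $2f+1$ such voters among the incoming neighbors, which in turn requires $2f+1$ globally robust witnesses, i.e.\ $3f$-redundancy. A second delicate point is the statement of convergence itself: since a phase-$t$ message may be delivered arbitrarily late, the spreading argument gives a finite stabilization time that depends on the message-delay pattern rather than the clean $\mnorm{\V}-2f$-iteration bound of the synchronous case, so the convergence requirement must be phrased accordingly (and, for the matching necessity direction only sketched here, one builds an indistinguishability argument that both corrupts $f$ agents and delays $f$ further non-faulty agents).
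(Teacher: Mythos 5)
The paper states this theorem without proof (the surrounding text says the asynchronous conditions are ``present[ed] without providing detailed proofs''), so there is no official argument to compare against; your construction is the natural asynchronous analogue of the paper's synchronous sufficiency proof (Theorem~\ref{thm:sufficiency-set-intersection}) and I find it correct: the ``wait for all but $f$ in-neighbors'' rule, the in-degree bound $\mnorm{N_i^-}\geq 2f+1$ extracted from condition a) with $L=\{i_0\}$, the induction showing good elements survive, and the spreading argument via condition b) applied to the evolving partition all check out, with $3f$-redundancy correctly supplying the $2f+1$ non-faulty witnesses needed after discounting $f$ withheld and $f$ corrupted messages. The only points worth tightening are the ones you already flag: the convergence time now depends on message delays rather than being bounded by $n-2f$ phases, and the phase counter/receipt bookkeeping should be reconciled with the ``local set and no additional state'' constraint, which the synchronous algorithm also implicitly relies on.
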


For unconstrained algorithms, in asynchronous systems, \noindent The following condition is necessary and sufficient:

\begin{theorem}
    Suppose $3f$-redundancy is satisfied, and $n\geq 3f+2$. An asynchronous $f$-resilient set intersection algorithm exists if the communication graph is $(2f+1)$-connected.
\end{theorem}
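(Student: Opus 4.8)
The plan is to adapt the synchronous unconstrained algorithm from Theorem~11 (reliable dissemination via internally vertex-disjoint paths, followed by a thresholded ``majority'' filter), with the single structural change that a non-faulty agent commits to its output as soon as it has reliably learned the local sets of \emph{some} $n-f$ agents (including itself), rather than waiting for all of them. The extra uncertainty this introduces -- up to $f$ non-faulty agents whose sets a given agent never gets to use -- is precisely what forces $3f$-redundancy in place of $2f$-redundancy, while the connectivity requirement is unchanged because reliable point-to-point dissemination still needs only $2f+1$ vertex-disjoint paths and Menger's Theorem, exactly as in the synchronous case.

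Concretely, in Step~1 each agent $i$ sends $X_i$ along all its outgoing edges and, using a fixed family of $2f+1$ internally vertex-disjoint paths to every other agent (which exist by $(2f{+}1)$-connectivity and Menger's Theorem), relays $X_i$ along those paths as well; agent $i$ sets $Z_i = X_i$, sets $Z_j$ to the set received directly from $j$ if $j\in N_i^-$, and otherwise sets $Z_j$ to any set of which it receives $f+1$ identical copies along internally vertex-disjoint paths. Let $D_i$ be the set of agents $j$ (with $i\in D_i$) for which $Z_j$ has been determined; agent $i$ waits until $\mnorm{D_i}\ge n-f$ and then runs Step~2, in which it initializes $O_i=\varnothing$ and, for each $y\in X_i$, adds $y$ to $O_i$ if at most $f$ agents $k\in D_i$ satisfy $y\notin Z_k$, finally outputting $O_i$.

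First I would argue termination: along each of the $\ge f+1$ of the $2f+1$ vertex-disjoint $j$-to-$i$ paths containing no faulty internal node, the correct set $X_j$ is eventually relayed to $i$, so $i$ eventually determines $Z_j$ for every non-faulty $j$; since there are at least $n-f$ non-faulty agents, $\mnorm{D_i}$ eventually reaches $n-f$, so Step~2 (a finite scan of $X_i$) is reached and terminates. Next I would record two facts about $D_i$: (a) for every non-faulty $j\in D_i$, $Z_j=X_j$ -- direct delivery from a neighbor is faithful, and otherwise at most $f$ of the $2f+1$ disjoint paths can be corrupted, so the $f+1$ agreeing copies must carry $X_j$; and (b) $D_i$ contains at most $f$ faulty agents, hence omits at most $f$ non-faulty agents (from $\mnorm{D_i}\ge n-f$ and there being at least $n-f$ non-faulty agents). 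Then correctness of Step~2 splits, for a fixed non-faulty $i$ and value $y$, into three cases: if $y\notin X_i$ it is never considered; if $y\in X_i$ but $y\notin X_{\V\backslash F}$, then since $\bigcap_{k\in\V}X_k\subseteq X_{\V\backslash F}$, $3f$-redundancy gives $\ge 3f+1$ agents with $y\notin X_k$, of which at most $f$ are faulty and at most $f$ further ones lie outside $D_i$, leaving $\ge f+1$ agents $k\in D_i$ with $y\notin Z_k$, so $y$ is rejected; and if $y\in X_{\V\backslash F}$, then every non-faulty agent has $y$, so the only $k\in D_i$ with $y\notin Z_k$ are faulty (at most $f$ of them), so $y$ is accepted. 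Hence $O_i=X_{\V\backslash F}$ for every non-faulty $i$, and $n\ge 3f+2$ is what keeps these counts non-vacuous (in particular $n-f\ge 2f+2$).

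The main obstacle is conceptual rather than computational: justifying that committing after only $n-f$ determinations is safe. The delicate point is that $D_i$ may simultaneously \emph{contain} some faulty agents (who flooded a consistent bogus value quickly) and \emph{exclude} some slow non-faulty ones, so one may not assume $D_i\supseteq\V\backslash F$; the bookkeeping in fact~(b) and case~(ii) is exactly where the budget $3f+1-f-f=f+1$ is consumed, and getting that accounting right -- and seeing that it is what upgrades the redundancy requirement from $2f$ to $3f$ while leaving connectivity at $2f+1$ -- is the crux.
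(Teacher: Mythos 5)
Your construction is correct, and it is the natural asynchronous adaptation of the paper's own synchronous unconstrained algorithm (Menger-based dissemination over $2f+1$ internally vertex-disjoint paths followed by a threshold filter); the paper itself states this theorem without proof, so there is no authorial argument to compare against, but your proof supplies exactly the expected missing details. The key accounting is right: waiting for only $n-f$ determined sets leaves at most $f$ faulty agents inside $D_i$ and at most $f$ non-faulty agents outside it, and $3f+1-f-f=f+1$ is precisely the margin needed for the rejection threshold, which is why the redundancy requirement rises to $3f$ while the connectivity requirement stays at $2f+1$.
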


\section{On general communication graphs}
\label{sec:generalized-graph}

In this section, we discuss the redundancy property needed on a generalized communication graph $G$ such that an $f$-resilient Byzantine set intersection algorithm exists. We consider a synchronous system in this section. However, these results can also be extended to asynchronous systems.

Recall that we defined $\H$ to be the set of all non-faulty agents in a given execution. $X_i$ is the local set of agent $i$. Specifically, consider the following redundancy property.

\begin{definition}[$(f, G)$-redundancy]
    Given a communication graph $G(\V,\mathcal{E})$, the sets of the agents are said to satisfy $(f,G)$-redundancy, if for any subset of agents $F\subset V$ with $\mnorm{F}\leq f$, if $y\notin\bigcap_{i\in \V\backslash{F}}X_i$, there is at least one agent $j$ in each source component in every reduced graph $G_F$, such that $y\notin X_j$. 
    \label{def:redundancy-fg}
\end{definition}
Let us denote the redundancy property defined above by \textbf{Property D}, which will be used in a later section.

Similar to Section~\ref{sec:set-inter}, we consider a class of constrained algorithms. Specifically, the algorithms are iterative, need to satisfy validity and convergence criteria, and each agent is constrained to maintain only a local set with no additional state. Now, we show that $(f,G)$-redundancy is necessary for this class of algorithms.

\begin{theorem}[Necessity]
    \label{thm:necessity-fg}
    For any given communication graph $G$,
    an $f$-resilient Byzantine set intersection algorithm exists only if $(f,G)$-redundancy is satisfied.
\end{theorem}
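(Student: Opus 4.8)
The plan is to argue by contradiction, following the template of Theorem~\ref{thm:necessity-set-intersection} but now working with an arbitrary reduced graph rather than a simple $L,R,F$ partition. Suppose $(f,G)$-redundancy fails: there is a subset $F\subset\V$ with $\mnorm{F}\leq f$, a reduced graph $G_F$, a source component $L$ of $G_F$, and a point $y$ with $y\notin\bigcap_{i\in\V\backslash F}X_i$, yet $y\in X_j$ for every $j\in L$. I would then build a specific input instance satisfying $2f$-redundancy (or, more precisely, an instance on which a correct constrained algorithm must still be well-defined) and exhibit two executions that the agents in $L$ cannot distinguish but which require opposite outputs for $y$, contradicting the existence of a correct constrained algorithm.

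The construction of the two indistinguishable executions is the heart of the argument, and I would organize it as follows. First, observe that because $L$ is a source component of $G_F$, every edge from $\V\backslash(L\cup F)$ into $L$ that survives in $G_F$ has been deleted --- so in $G$ itself, each node $i\in L$ has at most $f$ incoming neighbors outside $L\cup F$ (that is the definition of the reduced graph: at most $f$ incoming links removed per node). Call $R=\V\backslash(L\cup F)$. In Execution~i, the agents in $F$ are faulty, and $y$ is removed from the sets of whichever non-faulty agents (necessarily all outside $L$, or at least some agent in $R\cup F^{c}$) realize $y\notin\bigcap_{i\in\V\backslash F}X_i$; here $y$ must eventually be purged from every non-faulty set. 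In Execution~ii, from the viewpoint of an agent $\ell\in L$, the at-most-$f$ incoming neighbors of $\ell$ lying in $R$ are the faulty ones, everybody else (including all of $F$, now non-faulty) has $y$ in their set, and the faulty neighbors in $R$ simply replay the messages they sent in Execution~i; here $y$ lies in the intersection of all non-faulty sets and must be retained. Since $\ell$ only hears from its incoming neighbors, and the messages it receives are identical in the two executions (the faulty agents in $F$ in Execution~i and the faulty neighbors in $R$ in Execution~ii are chosen to send exactly the same thing), $\ell$ behaves identically, so by the Validity constraint it either drops $y$ in both or keeps $y$ in both --- either way Convergence fails in one of the executions. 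The key subtlety is to make sure both instances genuinely satisfy $(f,G)$-redundancy's hypothesis about the \emph{other} reduced graphs / faulty sets, i.e.\ that each is $2f$-redundancy-compatible: in Execution~i the removed point $y$ is missing from $\geq 2f+1$ agents (the $\leq f$ in $F$ plus the $\geq f+1$ in the source component... actually one needs the source-component/reduced-graph accounting to guarantee $\mnorm{R}\geq f+1$, which follows from $\mnorm{L}<n-2f$ when $n\geq 2f+2$), and in Execution~ii all non-faulty agents contain $y$.

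The main obstacle I anticipate is bookkeeping around which agents are faulty in Execution~ii and showing that the faulty set there has size $\leq f$: it should be exactly the at-most-$f$ incoming neighbors of $\ell$ that lie in $R$, but one must be careful that this is simultaneously the ``faulty'' set for \emph{every} $\ell\in L$ or else handle the agents in $L$ one at a time (the latter is cleaner: fix a single problematic $\ell$, since the decomposition guarantees a source component has \emph{no} incoming edges from other components in $G_F$, the set $N_\ell^-\cap R$ has size $\leq f$, and that is the faulty set). A secondary point to get right is that the argument needs $y$ to in fact have been removed at $\ell$ (not merely at some other non-faulty node) in Execution~i to derive the contradiction --- this uses Convergence, which requires \emph{all} non-faulty agents, including $\ell$, to converge to $X_{\V\backslash F}\not\ni y$. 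With those two pieces in place, the contradiction is immediate, so I expect the writeup to be short once the execution-construction is pinned down; essentially it is Theorem~\ref{thm:necessity-set-intersection} with ``partition into $L,R,F$'' upgraded to ``source component $L$ of a reduced graph $G_F$,'' which is exactly the generalization the section promises.
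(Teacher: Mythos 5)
Your proposal matches the paper's proof essentially step for step: assume a reduced graph $G_{F_0}$ has a source component $L_0$ all of whose members contain $y$ while $y\notin\bigcap_{i\in\V\backslash F_0}X_i$, note that each $\ell\in L_0$ then has at most $f$ incoming neighbors in $R_0=\V\backslash(L_0\cup F_0)$ in $G$ itself, and exhibit the two indistinguishable executions (faulty $F_0$ versus faulty $N_\ell^-\cap R_0$) so that validity plus convergence force a contradiction. Your worry about certifying $2f$-redundancy of the constructed instances is superfluous here --- unlike Theorem~\ref{thm:necessity-set-intersection}, this theorem carries no redundancy hypothesis on the inputs, so the instances need only be legal inputs --- but this does not affect the correctness of your argument.
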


\begin{proof}
    The proof is by contradiction. Suppose there exists a set intersection algorithm on $G(\V,\mathcal{E})$, and $(f,G)$-redundancy is not satisfied. That is, there exists a subset $F_0\subset\V$ with $\mnorm{F_0}\leq f$, such that for some point $y\notin \bigcap_{i\in\V\backslash F_0}X_i$, there exists a source component $L_0$ in some reduced graph $G_{F_0}$, such that for every agent $i\in L_0$, $y\in X_i$. Obviously $\mnorm{L_0}\geq1$. Note that since $y\in\bigcap_{i\in L_0}X_i$ but $y\notin\bigcap_{i\in\V\backslash F_0}$, it follows that $\mnorm{R_0}\geq 1$ and $y\notin\bigcap_{i\in R_0}X_i$, where $R_0\triangleq\V\backslash(F_0\cup L_0)$.

    Note that $F_0, L_0, R_0$ form a partition of $G$. 
    Since $L_0$ is a source component in $G_{F_0}$, there are at most $f$ incoming neighbors from $R_0$ for any node in $L_0$. 

    Since $(f,G)$-redundancy is not satisfied, it is possible to construct the following input of a set intersection problem on $G$, such that for $y$, $y\in X_i$ for all $i\in L_0$, while $y\notin X_j$ for all $j\in R_0\cup F_0$.

    For any agent $l\in L_0$, there are at most $f$ agents in $R_0$ that can send information to $l$. Since $\mnorm{F_0}\leq f$, $l$ can only receive information from at most $f$ agents in $F_0$ as well. Suppose that any faulty agent in $F_0$ will always send to $l$ a set that contains $y$. Any agent $l\in L_0$ cannot distinguish the following two scenarios:    

    \begin{enumerate}[nosep, label=\alph*)]
        \item 
        The actual situation, where all agents in $F_0$ are faulty,
        all agents in $R_0$ are not faulty, and 
        $y\notin X_i$ for all $i\in R_0$. In this case, $y$ is not in $\bigcap_{i\in\V\backslash F_0}X_i$, the intersection of local sets of all non-faulty agents, and should be removed from $X_l$ as well.
        \item 
        An equivalent possible situation from $l$'s perspective, where the agents in $N_l^-\cap R_0$ are faulty, the remaining agents are non-faulty, and for each non-faulty agent $i$, $y\in X_i$. 
        In this case, $y$ is in $\bigcap_{i\in\V\backslash(N_l^-\cap R_0)}X_i$, the intersection of local sets of all non-faulty agents, and should be kept in $X_l$ as well.
    \end{enumerate}
    For any set intersection algorithm on $G$, any agent $l\in L_0$ cannot distinguish Scenarios a) and b). Only one of the two scenarios can be true. But from $l$'s point of view, the two are identical.

    Let us assume that the validity condition of the algorithm is satisfied, that is, $\forall t$, $X_l^t\subseteq X_l^{t-1}$. 
    If agent $l$ chooses to keep $y$ in $X_l$ in all iterations, its output would not satisfy the convergence condition in scenario a).
    On the other hand, if agent $l$ chooses to remove $y$ from $X_l$ at any iteration $t$, its output would not satisfy the convergence condition in scenario b).

    Thus, agent $l$'s output cannot be guaranteed to be correct in all executions. This is a contradiction with the assumption that a correct algorithm exists. Thus, $(f,G)$-redundancy is necessary.
\end{proof}


\begin{theorem}[Sufficiency]
    \label{thm:sufficiency-fg}
    For any given communication graph $G$, an $f$-resilient Byzantine set intersection algorithm exists if $(f,G)$-redundancy is satisfied.
\end{theorem}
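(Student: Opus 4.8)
The plan is to give a constructive proof that mirrors the sufficiency argument for the constrained decentralized algorithm (Theorem~\ref{thm:sufficiency-set-intersection}), but replacing the global partition-based elimination step with a reduced-graph / source-component argument, since on a general graph we no longer have Condition A. First I would specify the algorithm: each agent $i$ initializes $X_i^0 = X_i$, and in each iteration $t$ sends $X_i^t$ to all outgoing neighbors, receives sets $Z_j^t$ from each incoming neighbor $j\in N_i^-$, and updates $X_i^{t+1} \leftarrow X_i^t \setminus Y_i^t$, where $Y_i^t$ collects every $y\in X_i^t$ for which strictly more than $f$ incoming neighbors $j$ sent a set $Z_j^t$ not containing $y$. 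This is exactly the local-majority-style filter of Theorem~\ref{thm:sufficiency-set-intersection}, so validity ($X_i^t\subseteq X_i^{t-1}$) is immediate, and it remains only to prove convergence: that after finitely many iterations every non-faulty agent's set equals $X_{\V\setminus F} = \bigcap_{i\in\V\setminus F}X_i$.

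The convergence argument splits into two halves. \textbf{(Soundness --- nothing good is removed.)} I would show by induction on $t$ that for every non-faulty agent $i$ and every $y\in X_{\V\setminus F}$, $y\in X_i^t$. Indeed, if $y\in X_j$ for all non-faulty $j$, then among the incoming neighbors of a non-faulty $i$, only the (at most $f$) faulty ones can report $y\notin Z_j^t$, so the threshold of ``more than $f$'' is never met and $y$ survives. \textbf{(Completeness --- everything bad is eventually removed.)} Fix $y\notin X_{\V\setminus F}$. Let $H = \V\setminus F$ be the non-faulty agents, and for each iteration $t$ let $A_t = \{ i\in H : y\notin X_i^t\}$ be the non-faulty agents that have already discarded $y$. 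By $(f,G)$-redundancy applied with this $F$, in \emph{every} reduced graph $G_F$, every source component contains some agent $j$ with $y\notin X_j$; in particular $A_0\neq\varnothing$. The key claim is: as long as $A_t \neq H$, we have $A_{t+1}\supsetneq A_t$. To prove this, consider the induced subgraph of $G$ on $H\setminus A_t$ together with the edges from $A_t$ into $H\setminus A_t$; I want to build a reduced graph $G_F$ in which some node of $H\setminus A_t$ lies in a source component, so that $(f,G)$-redundancy forces a contradiction unless that node has already discarded $y$ --- but it hasn't, hence it must discard $y$ in this iteration. Concretely: form $G_F$ by deleting $F$ and then, at each node of $H\setminus A_t$, deleting up to $f$ incoming edges; arrange these deletions (where possible) to cut edges coming from $A_t$. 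If some node $v\in H\setminus A_t$ has at most $f$ incoming edges from $A_t$ in $G$, we can delete all of them, and then $v$ receives, within $G_F$, no information from $A_t$-nodes; running the decomposition of $G_F$, $v$ sits in some component, and walking up to a source component of $G_F$ we find a source component entirely inside $H\setminus A_t$ (a source component reachable-from-nobody cannot contain an $A_t$-node feeding into it --- need care here). Then $(f,G)$-redundancy says that source component contains a non-faulty agent $j$ with $y\notin X_j$, i.e.\ $j\in A_t$, contradiction; hence every $v\in H\setminus A_t$ must have \emph{more than} $f$ incoming edges from $A_t$ --- wait, that is the wrong direction, so instead the contradiction shows there is \emph{no} valid reduced graph isolating $H\setminus A_t$, which by the structure of reduced graphs means some $v\in H\setminus A_t$ has $> f$ neighbors in... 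Let me restate the intended mechanism: the honest-to-work direction is that if $A_t\neq H$ then some $v\in H\setminus A_t$ receives $>f$ reports of ``$y\notin Z_j^t$'' (all from $A_t$, since non-faulty $A_t$-nodes honestly report the absence, and faulty nodes can only help), forcing $v$ into $A_{t+1}$; and the reason such a $v$ must exist is precisely $(f,G)$-redundancy, because otherwise every node of $H\setminus A_t$ could keep all but $\le f$ of its $A_t$-incoming edges cut, yielding a reduced graph $G_F$ in which $H\setminus A_t$ contains a source component with no non-faulty $j$ having $y\notin X_j$, contradicting Property~D.

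The main obstacle, and where I would spend the most care, is exactly this combinatorial step: showing that if \emph{no} non-faulty agent outside $A_t$ has more than $f$ incoming neighbors inside $A_t$, then one can actually \emph{construct} a reduced graph $G_F$ witnessing a violation of $(f,G)$-redundancy --- i.e.\ a reduced graph in which some source component lies entirely within $H\setminus A_t$ (so that all its non-faulty members still contain $y$). One must (i) delete $F$, (ii) at each node of $H\setminus A_t$ delete all (at most $f$) incoming edges from $A_t$, (iii) arbitrarily delete further incoming edges elsewhere to respect the ``$\le f$ removed per node'' budget if needed (here it's automatically fine since we removed $\le f$ at the relevant nodes), and then (iv) argue that the subgraph $G_F$ restricted to $H\setminus A_t$ has the property that at least one of its source components (in the decomposition of all of $G_F$) is contained in $H\setminus A_t$ --- this uses that $H\setminus A_t$ receives no $G_F$-edges from $A_t$, so no component of $G_F$ meeting $H\setminus A_t$ is reachable from $A_t$, and a DAG-source argument on the decomposition of $G_F$ restricted to the reachability-closed set $H\setminus A_t$ produces a genuine source component of $G_F$ inside it. Once that construction is in hand, Property~D gives a non-faulty $j$ in that source component with $y\notin X_j$, i.e.\ $j\notin H\setminus A_t$, the desired contradiction. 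Finally, since $\mnorm{A_t}$ strictly increases while $A_t\neq H$ and is bounded by $\mnorm{H}\le n$, after at most $n$ iterations $A_t = H$, i.e.\ every non-faulty agent has discarded every $y\notin X_{\V\setminus F}$; combined with soundness, $X_i^t = X_{\V\setminus F}$ for all non-faulty $i$, establishing convergence and hence $f$-resilience.
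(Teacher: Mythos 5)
Your proposal is correct, but it takes a genuinely different route from the paper's own proof. The paper argues convergence agent-by-agent: for each non-faulty $j$ it sorts the indicator values $\mathds{1}_i^j(y)$ of its incoming neighbors, trims the $f$ largest and $f$ smallest, and shows (via the auxiliary sets $S_j$, $S_j'$, $S_j''$) that the removal decision is equivalent to one based only on values from non-faulty agents in a suitable reduced graph; it then splits on whether $j$ lies in a source component of that reduced graph and lets removal propagate outward along the decomposition DAG. You instead run a global induction on the set $A_t$ of non-faulty agents that have already discarded $y$, and prove progress by contradiction: if no agent in $H\setminus A_t$ had more than $f$ incoming neighbors in $A_t$, you could cut all of those (at most $f$ per node) to build a legal reduced graph in which $H\setminus A_t$ receives no edges from $A_t$, so some source component of $G_F$ lies entirely inside $H\setminus A_t$, contradicting Property~D since every agent there still has $y$ in its input set. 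This is exactly the reduced-graph analogue of the paper's partition-based progress argument in Theorem~\ref{thm:sufficiency-set-intersection}, and it yields a cleaner, more self-contained convergence bound ($|A_t|$ strictly increases, so at most $n$ iterations); the one step you flagged as delicate --- that a source component of the full decomposition of $G_F$ must sit inside the reachability-closed set $H\setminus A_t$ --- is handled correctly in your final paragraph, since no SCC can straddle the cut and walking backwards in the finite condensation DAG from any component in $H\setminus A_t$ stays inside $H\setminus A_t$ and terminates at a source. The paper's trimming argument, for its part, makes more explicit why the local ``more than $f$'' filter simulates a reduced graph at each individual agent, but your version avoids the bookkeeping of $S_j'$ and $S_j''$ entirely. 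The mid-proof false start (``wait, that is the wrong direction'') is resolved correctly and should simply be edited out in a final write-up.
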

\begin{proof}
    The proof is by construction. Consider the following algorithm:

    Each agent $i$ sets the set $X_i^0$ to be its input set $X_i$. In each iteration $t$,
    \begin{enumerate}[label=\textbf{Step \arabic*}]
        \item Each agent $i$ sends the set $X_i^t$ to all agents $j$ with $(i,j)\in\mathcal{E}$.
        \item Each agent $j$ receives sets from all agents in $N_j^-$. For each agent $k\in N_j^-$, denote the set $j$ receives from $k$ by $Z_k^t$. For a non-faulty agent $k$, $Z_k^t=X_k^t$. For a Byzantine agent $l$, $Z_l^t$ can be an arbitrary set.
        \item Each agent $i$ updates its set by the following rule: Initiate the set $Y_j^t$ to be empty. For every $y\in X_j^t$, if there are more than $f$ agents $j\in N_i^-$ such that $y\notin Z_j^t$, add $y$ to $Y_j^t$. Then, the local set of $i$ is updated as
        \begin{align}
            X_i^{t+1}\leftarrow X_i^t\backslash Y_i^t.
            \label{eqn:update-general-sufficiency}
        \end{align}

    \end{enumerate}

    Now we show that the above algorithm is indeed $f$-resilient. The validity is already guaranteed by the update \eqref{eqn:update-general-sufficiency}. Also, by validity, for any value $y\notin X_j$ for any non-faulty agent $j$, $y$ will not be added back to any $X_j^t$ during the process of the algorithm. Therefore, for convergence, we only need to consider the values that are in the local set of any non-faulty agent.
    
    For convergence, consider an arbitrary non-faulty agent $j$, and an arbitrary value $y\in X_j^t$ at an arbitrary iteration $t$, in an arbitrary execution where the set of faulty agents is $F\subset\V$ with $\mnorm{F}\triangleq\phi\leq f$. There are two possible cases:
    \begin{enumerate}
        \item[Case 1] $y\in\bigcap_{i\in\V\backslash F}X_i$, therefore $y$ is in the input set $X_j$ and should be kept in the local set of every non-faulty agent $j$. Therefore, only agents in $F$ can send sets without $y$. For agent $j$, since $\mnorm{F}\leq f$, there are at most $f$ agents that may send sets without $y$, and $y$ will not be removed from the set $X_j^t$ in any iteration $t$.
        \item[Case 2] $y\notin\bigcap_{i\in\V\backslash F}X_i$, and therefore $y$ should be eventually removed from the local sets of all non-faulty agents. 
        To make the correctness argument in Case 2, let us define the following indicator function $\mathds{1}_i^j(y)$ with respect to $j$ and its incoming neighbor $i\in N_j^-$ 
        \begin{align}
            \mathds{1}_i^j(y) = \begin{cases}
                1, \qquad \textrm{if $y\in Z_i^t$}, \\
                0, \qquad \textrm{if $y\notin Z_i^t$}.
            \end{cases}
            \label{eqn:1-function}
        \end{align}
        Consider sorting the values $\mathds{1}_i^j(y)$ for all $i\in N_j^-$, and remove the largest and smallest $f$ values from the sorted list (breaking ties arbitrarily). Let us denote the set of agents sending the remaining values by $S_j$. Then it is obvious that we have the following claim:
        \begin{claim}
            There are more than $f$ agents $k\in N_j^-$ such that $y\notin Z_k$, if and only if the $\mathds{1}_i^j(y)$'s for $i\in S_j$ contain at least one 0.
            \label{claim:1}
        \end{claim}

        Consider the sorted list of values $\mathds{1}_i^j(y)$'s for $y$. By removing the largest and smallest $f$ values from the list, either (i) all faulty values are removed, which is possible since $\mnorm{F}\leq f$, or (ii) at least 1 value from faulty agents is kept. There are in total $2f$ values removed

        In scenario (i), all remaining values are from non-faulty agents. In scenario (ii), there are two cases:
        \begin{enumerate}
            \item If the remaining values contain both 0 and 1, it implies there are $f$ 0's and $f$ 1's that are removed, and each remaining faulty value can be replaced by a removed but equal value that is from a non-faulty agent. 
            \item If the remaining values contain only 0 or 1, it implies there are $f$ 0's or $f$ 1's that are removed, respectively, and each remaining faulty value can also be replaced by a removed but equal value that is from a non-faulty agent. 
        \end{enumerate}
        In either case, the remaining $\mnorm{N_j^-}-2f$ values can be viewed as if each of them is from a distinct non-faulty agent in $N_j^-$, and there are the same number of 0's and 1's as in from $S_j$. Denote the set of these agents by $S'_j$.
        
        Furthermore, let $\phi\triangleq\mnorm{F}$, and we can add $f-\phi$ to the values to $S'_j$ in the following way:
        \begin{enumerate}
            \item If there are both 0's and 1's in the values from $S_j'$, at most $\phi$ of the removed values were added back to $S_j'$ in the previous step, and at least $f-\phi$ non-faulty 1's still remain removed. 
            We can add $f-\phi$ additional 1's from the removed non-faulty values, so that there are $\mnorm{N_j^-}-f-\phi$ non-faulty values and they contain both 0 and 1.
            \item If there are only 0's or 1's in the values from $S_j'$, at most $\phi$ of the removed values were added back to $S_j'$ in the previous step, and at least $f-\phi$ non-faulty 0's or 1's still remain removed. 
            We can add $f-\phi$ additional 0's or 1's, respectively, from the removed non-faulty values, so that there are $\mnorm{N_j^-}-f-\phi$ non-faulty values and they contain only 0 or 1, respectively.
        \end{enumerate}
        Denote the set of these $\mnorm{N_j^-}-f-\phi$ non-faulty agents by $S_j''$. 
        The above transformations from $S_j$ to $S_j'$ to $S_j''$ give us the following claim:
        \begin{claim}
            \label{claim:2}
            $\mathds{1}_i^j(y)$'s for all $i\in S_j''$ contains at least one 0, if and only if $\mathds{1}_i^j(y)$'s for all $i\in S_j$ contains at least one 0.
        \end{claim}
        Recall the rule of deciding whether to remove $y$ in Step 3 of the algorithm. By Claims~\ref{claim:1} and~\ref{claim:2}, $y$ should be removed in the current iteration, if and only if $\mathds{1}_i^j(y)$'s for all $i\in S_j''$ contains at least one 0.

        By the definition of a reduced graph, it follows that there is a corresponding reduced graph $G_F$ in which agent $j$ would receive and only receive values from its incoming neighbors, and those agents are the same as non-faulty agents in $S_j''$.\footnote{Note that if some agent $j$ has $\mnorm{N_j^t}-f-\phi\leq0$, by the definition of a reduced graph, there exists a reduced graph $G_F$ such that $j$ itself is a source component. By $(f,G)$-redundancy, $y$ is not in $X_j$, and therefore the algorithm does not need to consider removing $y$ from the local set of $j$, skipping this discussion.}
        There are two cases: 
        \begin{enumerate}[label=Case (\roman*)]
            \item $j$ is in a source component in the corresponding reduced graph $G_F$. By $(f,G)$-redundancy, there is at least one agent $i$ with $y\notin X_i^0$ in every source component of $G_{F}$ for any $F\subset\V$ with $\mnorm{F}\leq f$. For every outgoing neighbor $k$ of $i$ in the same source component, $i\in N_k^-$, and therefore, values from $S_k''$ of agent $k$ contain at least one 0. In the current iteration $t$, if $y\in X_k^t$, $y$ will be removed from agent $k$'s set.
            \item $j$ is not in a source component in the corresponding reduced graph $G_F$. Following the discussions of Case (i), eventually all agents that belong to a source component in $G_{F}$ will remove $y$ from their local sets in finite iterations. Since $j$ is not in a source component, there exists a path in $G_F$ from an agent $k$ in a source component to $j$. At least one more agent on that path will remove $y$ from its local set in each iteration after agent $k$ removes $y$ from its local set. Therefore, $j$ will remove $y$ from its local set in finite iterations as well.
        \end{enumerate}
        Combining the two cases above, agent $j$, and therefore all non-faulty agents will remove $y$ from their local set within finite iterations.
    \end{enumerate}
    Therefore, for all non-faulty agents, all values in $\bigcap_{i\in\V\backslash F}X_i$ will be kept, and all values not in $\bigcap_{i\in\V\backslash F}X_i$ will be removed within finite iterations. In other words, the proposed algorithm is $f$-resilient.
\end{proof}

\section{From set intersection to optimization}
\label{sec:to-optimization}

Inspired by the relationship between the set intersection and optimization problems discussed in Section~\ref{sec:optimization}, results in previous sections can also provide us insights into the necessary and sufficient conditions for solving a Byzantine distributed optimization problem in a generalized network.

We consider a synchronous system in this section. However, these results can also be extended to asynchronous systems.
 
\subsection{With $2f$-redundancy}

We first examine the decentralized case with $2f$-redundancy. Consider the following assumptions:

\begin{assumption}
    \label{assum:convex-diff}
    The cost functions of non-faulty agents are convex and differentiable.
\end{assumption}

\begin{assumption}
    \label{assum:singleton-minimum}
    The minimum set of the aggregate of all cost functions $\arg\min_x\sum_{i\in\V}Q_i(x)$ only has 1 point.
\end{assumption}

Let us recapture some related results bridging optimization and set intersection in the previous sections. We define $f$-resilience and $2f$-redundancy (Property A) for Byzantine optimization problems in Definitions~\ref{def:f-resilience} and \ref{def:redundancy-exact}, respectively. We also define $f$-resilience and $2f$-redundancy (Properties B and C) for Byzantine set intersection problems in Definitions~\ref{def:f-resilient-set-intersection}, ~\ref{def:set-redundancy-1} and \ref{def:set-redundancy-2}, respectively. We then show in Theorem~\ref{thm:necessity-set-intersection} the necessity condition (Condition A in Definition~\ref{def:condition-a}) for an $f$-resilient set intersection algorithm to exist if $2f$-redundancy is satisfied. We also show in Lemma~\ref{lemma:alternative-condition-necessity} that with Assumption~\ref{assum:convex-diff}, Property B on minimum sets of the cost functions, which is equivalent to Property A on cost functions, derives Property C on the minimum sets of the cost functions, or in short, Property A on cost functions derives Property C on minimum sets of the cost functions.

For Byzantine optimization problems on a decentralized distributed system with $2f$-redundancy, we have the following results:

\begin{theorem}[Necessity, optimization]
    In a decentralized distributed system for Byzantine optimization, 
    suppose $2f$-redundancy (Property A, Definition~\ref{def:redundancy-exact}) is satisfied. A deterministic $f$-resilient algorithm per Definition~\ref{def:f-resilience} exists for all cost functions satisfying Assumptions~\ref{assum:convex-diff} and~\ref{assum:singleton-minimum}, only if Condition A per Definition~\ref{def:condition-a} is satisfied.
\end{theorem}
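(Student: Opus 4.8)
The plan is to reduce the optimization statement to the set-intersection necessity result (Theorem~\ref{thm:necessity-set-intersection}) via the translation established in Section~\ref{sec:optimization}. Concretely, suppose a deterministic $f$-resilient optimization algorithm $\mathcal{A}$ exists on $G$ for all cost functions satisfying Assumptions~\ref{assum:convex-diff} and~\ref{assum:singleton-minimum}. I want to manufacture, from $\mathcal{A}$, a set-intersection algorithm that would contradict Theorem~\ref{thm:necessity-set-intersection} whenever Condition~A fails. The key link is: under Assumption~\ref{assum:convex-diff}, $2f$-redundancy as Property~A is equivalent to Property~B/C on the minimum sets $X_i = \arg\min_x Q_i(x)$ (this equivalence is exactly what Lemmas~\ref{lemma:alternative-condition-necessity} and~\ref{lemma:alternative-condition-sufficiency} give us), and by Lemma~\ref{lemma:minimum-set-intersection}, $\arg\min_x\sum_{i\in\V\backslash F}Q_i(x) = \bigcap_{i\in\V\backslash F}X_i$. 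So an $f$-resilient optimization algorithm, in the convex/differentiable regime with $2f$-redundancy, is effectively computing a point of the non-faulty set intersection.

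The main obstacle is that Theorem~\ref{thm:necessity-set-intersection} is a statement about \emph{constrained} set-intersection algorithms (those maintaining only a shrinking local set and satisfying Validity/Convergence with output equal to $X_{\V\backslash F}$), whereas an optimization algorithm need only output a \emph{single point}, not the whole intersection, and carries no constraint on its internal state. I therefore cannot invoke Theorem~\ref{thm:necessity-set-intersection} as a black box; instead I will re-run its indistinguishability argument directly at the optimization level. The structure is: assume Condition~A fails, take the partition $L, R, F$ (and its transformed version $L', R', F'$ with $|F'| = f$, $|R'|\geq f+1$, $L'\neq\varnothing$, and $|N_i^-\cap R'|\leq f$ for every $i\in L'$, exactly as in the proof of Theorem~\ref{thm:necessity-set-intersection}; feasible since $n\geq 2f+2$). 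Then construct two cost-function profiles: in Scenario~(i) agents in $R'$ have a distinct minimizer so that $\widehat{x}\in\bigcap_{i\in\V\backslash F'}X_i$ must avoid the "bad" region, and in Scenario~(ii) agents in $N_l^-\cap R'$ are faulty and the non-faulty minimizer set is larger. Here Assumption~\ref{assum:singleton-minimum} is useful: I can engineer both profiles so that $\arg\min_x\sum_{i\in\V}Q_i(x)$ is a single point in each, and the two scenarios force that single point to be \emph{different} while being indistinguishable to any agent $l\in L'$ — because $l$'s incoming information from $R'$ is swamped by at most $f$ agents and the $|F'|=f$ faulty agents can simulate whichever of the two worlds they wish.

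The remaining step is to make the two profiles concrete. I would take all cost functions to be quadratics $Q_i(x) = \tfrac12\|x - c_i\|^2$ (convex, differentiable, $X_i = \{c_i\}$ a singleton) on $\R$, chosen so that $2f$-redundancy/Property~A holds: all non-faulty agents in $L'$ share the common minimizer, and enough agents overall pin down the aggregate minimum. Pick a point $y$ with $y\in X_i$ for $i\in L'$ and $y\notin X_i$ for $i\in F'\cup R'$; since $|F'\cup R'|\geq 2f+1$, Property~C holds, so both scenarios are legitimate $2f$-redundant instances, and by Lemma~\ref{lemma:minimum-set-intersection} plus Assumption~\ref{assum:singleton-minimum} the aggregate minimizer of the non-faulty set is a well-defined point that \emph{differs} between the two scenarios. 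Since a deterministic algorithm must produce the same output for agent $l$ in two indistinguishable executions, it cannot be correct in both — contradicting $f$-resilience. Hence Condition~A is necessary. I expect the delicate bookkeeping to be in verifying that the quadratic profiles simultaneously satisfy Property~A in \emph{both} scenarios and yield genuinely distinct aggregate minimizers; the indistinguishability argument itself is a direct transcription of the one in Theorem~\ref{thm:necessity-set-intersection}.
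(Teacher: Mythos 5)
Your overall strategy --- convert the optimization algorithm into a set-intersection algorithm via Lemma~\ref{lemma:minimum-set-intersection} and the equivalence of Property A with Property C, then replay the indistinguishability argument of Theorem~\ref{thm:necessity-set-intersection} on the transformed partition $L',R',F'$ --- is exactly the paper's route, and your observation that Theorem~\ref{thm:necessity-set-intersection} cannot simply be invoked as a black box (it is stated for constrained algorithms computing the whole intersection, not a single point) is a legitimate concern that the paper itself disposes of only with the phrase ``following the same argument.''

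However, your concrete instantiation fails. With $Q_i(x)=\tfrac12\norm{x-c_i}^2$, every minimum set $X_i=\arg\min_x Q_i(x)=\{c_i\}$ is a singleton, and part (i) of Property A (Definition~\ref{def:redundancy-exact}) requires $\bigcap_{i\in\V}X_i\neq\varnothing$, which forces all the $c_i$ to coincide; equivalently, part (ii) forces the average of the $c_i$ over every subset of size at least $n-2f$ to equal the global average, which again forces all $c_i$ equal. So you cannot have $y\in X_i$ for $i\in L'$ and $y\notin X_i$ for $i\in F'\cup R'$ using strictly convex quadratics: neither of your two profiles is a legitimate $2f$-redundant instance, and the intended contradiction evaporates. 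This is not ``delicate bookkeeping'' but an impossibility for that function class. The repair is to use convex, differentiable functions with non-degenerate minimum sets --- e.g., $Q_i(x)=\dist{x}{[x^*,y]}^2$ for agents in $L'$ (flat on the segment $[x^*,y]$), and $\norm{x-x^*}^2$ resp.\ $\norm{x-y}^2$ for the agents in $F'\cup R'$ in the two scenarios --- so that each per-agent minimum set can contain or omit $y$ while all share a common point, Assumption~\ref{assum:singleton-minimum} holds because the aggregate minimum is the singleton intersection ($\{x^*\}$ in one scenario, $\{y\}$ in the other), and the two scenarios force genuinely different outputs. Note that this repair requires changing, between the two scenarios, the cost functions of the agents in $R'\setminus N_l^-$, who are non-faulty in both; you must then argue, as in the proof of Theorem~\ref{thm:necessity-set-intersection}, that this change cannot reach agent $l$'s view.
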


\begin{proof}
    The proof is by contradiction. Suppose $2f$-redundancy in Property A is satisfied,  Condition A is not satisfied, and a deterministic $f$-resilient algorithm exists. 
    
    With Assumption~\ref{assum:convex-diff}, the cost functions of non-faulty agents are convex and differentiable. By Lemma~\ref{lemma:alternative-condition-necessity}, Property A indicates Property C per Definition~\ref{def:set-redundancy-2} is also satisfied for the minimum sets $X_i$'s of the cost functions of each agent $i$. Property A with Assumption~\ref{assum:singleton-minimum} also indicates that there exists $x^*$, such that
    \begin{align}
        \{x^*\} = \arg\min_x\sum_{i\in\V}Q_i(x) = \arg\min_x\sum_{i\in S}Q_i(x)
    \end{align}
    for all $S\subseteq\V$ with $\mnorm{S}\geq n-2f$. This implies $\bigcap_{i\in S}\arg\min_xQ_i(x)\neq\varnothing$ for all $S$ with $\mnorm{S}\geq n-2f$ as well. By Lemma~\ref{lemma:minimum-set-intersection}, we also have
    \begin{align}
        \bigcap_{i\in S}\arg\min_xQ_i(x) = \arg\min_x\sum_{i\in S}Q_i(x)
    \end{align}
    for all $S$.
    
    A deterministic $f$-resilient optimization algorithm exists, meaning it will output $x^*$ despite the presence of up to $f$ Byzantine agents. It follows that this algorithm is also $f$-resilient per Definition~\ref{def:f-resilient-set-intersection} for set intersection problems over sets $X_i$'s. 

    Following the same argument in the proof of Theorem~\ref{thm:necessity-set-intersection}, Condition A can also be shown to be necessary for an $f$-resilient algorithm to exist where the intersection of the sets of all non-faulty agents is known to be a singleton.
    However, as argued above, the $f$-resilient set intersection algorithm exists without Condition A, which is a contradiction. Hence, the proof.
\end{proof}

\begin{theorem}[Sufficiency, optimization]
    In a decentralized distributed system for Byzantine optimization, suppose Assumption~\ref{assum:convex-diff} is satisfied. Suppose $2f$-redundancy (Property A, Definition~\ref{def:redundancy-exact}) is satisfied. A deterministic $f$-resilient algorithm per Definition~\ref{def:f-resilience} exists if Condition A per Definition~\ref{def:condition-a} is satisfied.
\end{theorem}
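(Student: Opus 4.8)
The plan is to reduce the optimization problem to the constrained set intersection problem of Section~\ref{sec:set-inter} and then invoke the sufficiency direction, Theorem~\ref{thm:sufficiency-set-intersection}. Each non-faulty agent $i$ first forms from its own cost function the set $X_i\triangleq\arg\min_x Q_i(x)$; under Assumption~\ref{assum:convex-diff} every such $X_i$ is a closed convex set (a finite convex function on $\R^d$ that is differentiable everywhere is automatically $C^1$, so $X_i=(\nabla Q_i)^{-1}(0)$ is closed). Since $2f$-redundancy holds in the form of Property~A (Definition~\ref{def:redundancy-exact}), the arguments of Section~\ref{sec:optimization} --- Lemma~\ref{lemma:minimum-set-intersection} together with Lemmas~\ref{lemma:alternative-condition-necessity} and~\ref{lemma:alternative-condition-sufficiency} --- show that the family $\{X_i\}_{i\in\V}$ satisfies Property~B (Definition~\ref{def:set-redundancy-1}), equivalently Property~C (Definition~\ref{def:set-redundancy-2}); that is, $2f$-redundancy for set intersection holds for the $X_i$'s. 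The same arguments give $\arg\min_x\sum_{i\in S}Q_i(x)=\bigcap_{i\in S}X_i$ for every $S\subseteq\V$, so producing a minimizer of the aggregate non-faulty cost is exactly the task of finding a point of $\bigcap_{i\in\V\backslash F}X_i$.

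Given this, the proposed algorithm is: each non-faulty agent $i$ runs the constrained set intersection algorithm from the proof of Theorem~\ref{thm:sufficiency-set-intersection} with input set $X_i$. Condition~A (Definition~\ref{def:condition-a}) is assumed and $2f$-redundancy for the $X_i$'s has just been established, so by Theorem~\ref{thm:sufficiency-set-intersection} the algorithm satisfies validity and convergence; hence after finitely many iterations every non-faulty agent holds exactly $X_{\V\backslash F}=\bigcap_{i\in\V\backslash F}X_i$. Because $\mnorm{\V\backslash F}\geq n-f\geq n-2f$, Property~B yields $\bigcap_{i\in\V\backslash F}X_i=\bigcap_{i\in\V}X_i\neq\varnothing$, so Lemma~\ref{lemma:minimum-set-intersection} identifies this common set with $\arg\min_x\sum_{i\in\V\backslash F}Q_i(x)$. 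Finally, each agent outputs one point of this set chosen by a fixed deterministic rule --- e.g., the element of least Euclidean norm, which is well-defined and unique for a nonempty closed convex set and moreover makes all non-faulty agents agree. This output is a minimizer of the aggregate non-faulty cost, so the algorithm is $f$-resilient per Definition~\ref{def:f-resilience}, and it is deterministic since the set intersection subroutine and the selection rule both are.

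I expect the work to be bookkeeping rather than a deep obstacle: one must confirm that the translation carries over every hypothesis required by Theorem~\ref{thm:sufficiency-set-intersection} (the $2f$-redundancy of the $X_i$'s, and Condition~A unchanged on $G$) and that the final point lies in the right set (Lemma~\ref{lemma:minimum-set-intersection}), together with a remark that ``communicating a local set'' is legitimate in the abstract message-passing model even though the $X_i$ may be infinite (alternatively, the agents may exchange the functions $Q_i$ themselves and apply the per-point update rule symbolically). The one caveat to flag is that Theorem~\ref{thm:sufficiency-set-intersection} is stated for $n\geq 2f+2$, so the degenerate regime $n\leq 2f+1$ --- where Property~A forces all $X_i$ to coincide and each agent can simply output a deterministic point of its own minimum set --- should be dispatched separately, or $n\geq 2f+2$ added to the hypotheses.
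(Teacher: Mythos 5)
Your proposal takes essentially the same route as the paper's proof: reduce to the constrained set intersection problem by passing to the minimum sets $X_i=\arg\min_x Q_i(x)$, use Lemma~\ref{lemma:minimum-set-intersection} and Lemma~\ref{lemma:alternative-condition-necessity} to show Property A yields Property C for the $X_i$'s, invoke Theorem~\ref{thm:sufficiency-set-intersection} under Condition A, and have every non-faulty agent output a canonical point of the resulting common intersection. Your two refinements --- selecting the minimum-norm element of the (closed convex, nonempty) intersection rather than the paper's possibly ill-defined ``geometric center,'' and flagging that Theorem~\ref{thm:sufficiency-set-intersection} assumes $n\geq 2f+2$ while this theorem's statement omits it --- are both legitimate and, if anything, tighten the paper's argument.
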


\begin{proof}
    The proof is by reduction. With Assumption~\ref{assum:convex-diff}, $2f$-redundancy in Property A 
    indicates that $\bigcap_{i\in\V}\arg\min_xQ_i(x)\neq\varnothing$. By Lemma~\ref{lemma:minimum-set-intersection}, we also have 
    \begin{align}
        \bigcap_{i\in S}\arg\min_xQ_i(x) = \arg\min_x\sum_{i\in S}Q_i(x)
    \end{align}
    for all $S\subseteq\V$. Therefore, the output of an $f$-resilient optimization algorithm is a point in $\bigcap_{i\in\V\backslash F}\arg\min_xQ_i(x)$, where $F$ is the set of all faulty agents in an execution.

    With Assumption~\ref{assum:convex-diff}, by Lemma~\ref{lemma:alternative-condition-necessity}, $2f$-redundancy indicates Property C per Definition~\ref{def:set-redundancy-2}. With Property C and Condition A, by Theorem~\ref{thm:sufficiency-set-intersection}, there exists a deterministic $f$-resilient set intersection algorithm. Therefore, consider the following algorithm:
    \begin{enumerate}[label=\textbf{Step \arabic*}]
        \item Each agent $i$ computes the minimum set $X_i$ of its local cost function $Q_i(x)$.
        \item The system runs an $f$-resilient set intersection algorithm to find the set intersection $\bigcap_{i\in\V\backslash F}X_i$ among all non-faulty agents.
        \item Each agent $i$ output the geometric center of $\bigcap_{i\in\V\backslash F}X_i$.
    \end{enumerate}
    By designating the output to be the geometric center, each non-faulty agent agrees on the same point in $\bigcap_{i\in\V\backslash F}\arg\min_xQ_i(x)$. Therefore, the proposed algorithm is $f$-resilient for Byzantine optimization.
\end{proof}

\subsection{With a general communication graph}

Let us first define the corresponding redundancy condition to Definition~\ref{def:redundancy-fg} in optimization problems.

\begin{definition}[$(f,G)$-redundancy]
    \label{def:redundancy-fg-opt}
    Given a communication graph $G(\V,\mathcal{E})$ of a distributed system for optimization, the cost functions of the agents are said to satisfy $(f,G)$-redundancy, if for any subset of agent $F\subset\V$ with $\mnorm{F}\leq f$, for every source component $S$ in every reduced graph $G_F$, we have
    \begin{align}
        \arg\min_x\sum_{i\in S}Q_i(x) = \arg\min_x\sum_{i\in\V\backslash F}Q_i(x).
        \label{eqn:redundancy-fg-opt}
    \end{align}
\end{definition}
Recall that we denote the $(f,G)$-redundancy in Definition~\ref{def:redundancy-fg} for set intersection by {Property D} in Section~\ref{sec:generalized-graph}. Let us also denote the $(f,G)$-redundancy in Definition~\ref{def:redundancy-fg-opt} by \textbf{Property E}. 

Consider the following assumption:
\begin{assumption}
    \label{assum:minimum-set-intersection}
    The intersection of the minimum sets of the cost functions of all agents is not empty. That is, 
    \begin{align}
        \bigcap_{i\in\V}\arg\min_xQ_i(x)\neq\varnothing.
    \end{align}
\end{assumption}

Recall that we defined 
\begin{align}
    X_i &\triangleq \arg\min_xQ_i(x), \\
    X_S &\triangleq \arg\min_x\sum_{i\in S}Q_i(x),
\end{align}
for all agent $i\in\V$ and subset $S\subseteq\V$. 
We have
\begin{lemma}
    Suppose Assumption~\ref{assum:minimum-set-intersection} is satisfied. On any given communication graph $G$, for a group of $n$ cost functions, Property E is satisfied if and only if Property D is satisfied for the minimum set of each cost function.
    \label{lemma:redundancy-fg-equal}
\end{lemma}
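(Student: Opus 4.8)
The plan is to prove the biconditional by relating $X_S = \arg\min_x\sum_{i\in S}Q_i(x)$ to $\bigcap_{i\in S}X_i$ using Lemma~\ref{lemma:minimum-set-intersection}, and then unpacking the definitions of Properties D and E at the level of source components of reduced graphs. The key observation is that Assumption~\ref{assum:minimum-set-intersection} gives $\bigcap_{i\in\V}X_i\neq\varnothing$, so for \emph{every} $S\subseteq\V$ we have $\bigcap_{i\in S}X_i\supseteq\bigcap_{i\in\V}X_i\neq\varnothing$, and hence by Lemma~\ref{lemma:minimum-set-intersection}, $X_S=\bigcap_{i\in S}X_i$ for all $S\subseteq\V$. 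In particular this applies to $S=\V\backslash F$ and to each source component $S$ of each reduced graph $G_F$. So Property E's defining equation $X_S=X_{\V\backslash F}$ is equivalent to $\bigcap_{i\in S}X_i=\bigcap_{i\in\V\backslash F}X_i$.

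Next I would reformulate Property D in the same $\bigcap$-language. Property D says: for every $F$ with $\mnorm{F}\leq f$, if $y\notin\bigcap_{i\in\V\backslash F}X_i$, then in every reduced graph $G_F$, every source component $S$ contains an agent $j$ with $y\notin X_j$, i.e.\ $y\notin\bigcap_{i\in S}X_i$. Since source components are subsets of $\V\backslash F$, we always have $\bigcap_{i\in\V\backslash F}X_i\subseteq\bigcap_{i\in S}X_i$; thus the reverse containment $\bigcap_{i\in S}X_i\subseteq\bigcap_{i\in\V\backslash F}X_i$ is exactly the statement that no point $y$ lies in $\bigcap_{i\in S}X_i\setminus\bigcap_{i\in\V\backslash F}X_i$, which is precisely what Property D asserts (quantified over all $y$, all $F$, all $G_F$, all source components $S$). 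Therefore Property D is equivalent to: for all such $F$, $G_F$, and source components $S$, $\bigcap_{i\in S}X_i=\bigcap_{i\in\V\backslash F}X_i$.

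Finally I would combine the two reformulations: Property E $\iff$ ($X_S=X_{\V\backslash F}$ for all $F,G_F$, source components $S$) $\iff$ ($\bigcap_{i\in S}X_i=\bigcap_{i\in\V\backslash F}X_i$ for all $F,G_F$, source components $S$, using the Lemma~\ref{lemma:minimum-set-intersection} identity just established) $\iff$ Property D. One should be slightly careful about the degenerate case where a source component $S$ might in principle be empty or where $\V\backslash F$ is empty, but since reduced graphs are defined on $\V_F=\V\backslash F$ which is nonempty (as $\mnorm{F}\leq f<n$) and every strongly connected component of a nonempty graph is nonempty, these edge cases do not arise.

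I expect the main obstacle to be bookkeeping with the quantifiers rather than any deep idea: one must make sure the ``for every source component in every reduced graph'' quantification lines up identically in both directions, and that the application of Lemma~\ref{lemma:minimum-set-intersection} is legitimate for \emph{every} subset appearing (which is where Assumption~\ref{assum:minimum-set-intersection} is essential — without a nonempty global intersection, $X_S=\bigcap_{i\in S}X_i$ can fail). Once the identity $X_S=\bigcap_{i\in S}X_i$ is in hand for all $S$, both Property D and Property E say literally the same thing about the sets $\{X_i\}$, and the equivalence is immediate.
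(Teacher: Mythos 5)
Your proposal is correct and follows essentially the same route as the paper: apply Lemma~\ref{lemma:minimum-set-intersection} under Assumption~\ref{assum:minimum-set-intersection} to identify $X_S$ with $\bigcap_{i\in S}X_i$ for every relevant $S$, then observe that Property D and Property E both reduce to the statement $\bigcap_{i\in S}X_i=\bigcap_{i\in\V\backslash F}X_i$ for every source component $S$ of every reduced graph $G_F$. Your explicit note that $\bigcap_{i\in S}X_i\supseteq\bigcap_{i\in\V}X_i\neq\varnothing$ is what licenses the use of Lemma~\ref{lemma:minimum-set-intersection} is a slightly more careful rendering of a step the paper leaves implicit.
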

\begin{proof}
    Suppose Property E is satisfied. By Lemma~\ref{lemma:minimum-set-intersection}, with Assumption~\ref{assum:minimum-set-intersection}, for any subset $S\subseteq\V$ we have
    \begin{align}
        \arg\min_x\sum_{i\in S}Q_i(x) = \bigcap_{i\in S}\arg\min_xQ_i(x).
    \end{align}
    Therefore, from \eqref{eqn:redundancy-fg-opt} we have
    \begin{align}
        \bigcap_{i\in S}\arg\min_xQ_i(x) = \bigcap_{i\in\V\backslash F}\arg\min_xQ_i(x),
    \end{align}
    or in short, $\bigcap_{i\in S}X_i = \bigcap_{i\in\V\backslash F}X_i$.
    That is, for any $y\notin\bigcap_{i\in\V\backslash F}X_i$, for every source component $S$ in $G_F$, there is at least one agent $i\in S$, $y\notin X_i$. In other words, Property D is also satisfied for the minimum set of each cost function.

    Now consider the other direction. Suppose Property D is satisfied for the minimum set of each cost function. That is, for any $F\subset\V$ with $\mnorm{F}\geq f$, if $y\notin\bigcap_{i\in\V\backslash F}X_i$, there is at least one agent $j$ in any source component $S$ in every reduced graph $G_F$, such that $y\notin X_j$. Therefore, we have $y\notin\bigcap_{i\in S}X_i$. Since this is true for all $y\notin\bigcap_{i\in\V\backslash F}X_i$, we have $\bigcap_{i\in S}X_i \subseteq \bigcap_{i\in\V\backslash F}X_i$. On the other hand, since $S\subseteq \V\backslash F$, we also have $\bigcap_{i\in S}X_i \supseteq \bigcap_{i\in\V\backslash F}X_i$. Therefore,  $\bigcap_{i\in S}X_i = \bigcap_{i\in\V\backslash F}X_i$. 
    By Lemma~\ref{lemma:minimum-set-intersection}, with Assumption~\ref{assum:minimum-set-intersection}, for any subset $S\subseteq\V$ we have
    \begin{align}
        \arg\min_x\sum_{i\in S}Q_i(x) = \bigcap_{i\in S}\arg\min_xQ_i(x).
    \end{align}
    Therefore, 
    \begin{align}
        \arg\min_x\sum_{i\in S}Q_i(x) = \arg\min_x\sum_{i\in\V\backslash F}Q_i(x).
    \end{align}
    That is, Property E is also satisfied.
\end{proof}

\begin{theorem}[Necessity]
    For any given communication graph $G$ with $n\geq f+1$, an $f$-resilient Byzantine optimization algorithm exists for all cost functions satisfying Assumptions~\ref{assum:singleton-minimum} and \ref{assum:minimum-set-intersection}, only if Property E, $(f,G)$-redundancy for optimization, is satisfied.
\end{theorem}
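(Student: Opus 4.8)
The argument will be by contradiction and will essentially transport the indistinguishability construction of Theorem~\ref{thm:necessity-fg} through the optimization/set-intersection reduction used earlier in this section. Suppose a deterministic $f$-resilient optimization algorithm $\mathcal{A}$ existed for every profile meeting Assumptions~\ref{assum:singleton-minimum} and~\ref{assum:minimum-set-intersection}, yet Property E (Definition~\ref{def:redundancy-fg-opt}) failed for some such profile. Writing $X_i=\arg\min_x Q_i(x)$, Lemma~\ref{lemma:minimum-set-intersection} together with Assumption~\ref{assum:minimum-set-intersection} gives $\arg\min_x\sum_{i\in S}Q_i(x)=\bigcap_{i\in S}X_i$ for every $S\subseteq\V$, so that on any profile obeying the two assumptions the output of $\mathcal{A}$ under fault set $F$ is forced to be the unique point of $\bigcap_{i\in\V\setminus F}X_i$ whenever that intersection is a singleton, and Assumption~\ref{assum:singleton-minimum} already makes $\bigcap_{i\in\V}X_i$ a singleton. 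Since Assumption~\ref{assum:minimum-set-intersection} holds, Lemma~\ref{lemma:redundancy-fg-equal} converts the failure of Property E into a failure of Property D (Definition~\ref{def:redundancy-fg}) for the sets $\{X_i\}$: there are $F_0\subset\V$ with $\mnorm{F_0}\le f$, a reduced graph $G_{F_0}$, a source component $L_0$ of $G_{F_0}$, and a point $y\notin\bigcap_{i\in\V\setminus F_0}X_i$ with $y\in X_i$ for every $i\in L_0$. Setting $R_0=\V\setminus(L_0\cup F_0)$, one has $L_0\ne\varnothing$, $R_0\ne\varnothing$, $y\notin X_r$ for some $r\in R_0$, and, because $L_0$ is a source component of $G_{F_0}$, every edge from $R_0$ into $L_0$ was deleted in forming $G_{F_0}$, so the set $B\subseteq R_0$ of endpoints of these edges can be enclosed in a fault set of size at most $f$ (immediate when $\mnorm{L_0}=1$).

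Using this witness I would build two profiles $\{Q_i^{a}\}$ and $\{Q_i^{b}\}$, both satisfying Assumptions~\ref{assum:singleton-minimum} and~\ref{assum:minimum-set-intersection}, together with two executions of $\mathcal{A}$. In the first, the faulty set is $F_0$; the minimum sets are chosen so that $y\in X_i^{a}$ exactly for $i\in L_0$, with $\bigcap_{i\in\V\setminus F_0}X_i^{a}$ driven down to a singleton $\{x^{*}\}$, $x^{*}\ne y$, and $\bigcap_{i\in\V}X_i^{a}=\{x^{*}\}$. In the second, the faulty set is $F_b$, a set of size at most $f$ containing $B$; the minimum sets are chosen so that every non-faulty agent holds $y$, with $\bigcap_{i\in\V\setminus F_b}X_i^{b}=\{y\}$ and $\bigcap_{i\in\V}X_i^{b}$ a singleton. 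The two profiles are made to coincide, cost function by cost function, on every agent that is non-faulty in both executions, i.e.\ on $\V\setminus(F_0\cup F_b)$; this is consistent because outside $F_0\cup F_b$ only the agents of $L_0$ must hold $y$, and $L_0$ being a source component guarantees that no $R_0$-agent feeds $L_0$ except through $B\subseteq F_b$.

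Finally I would verify indistinguishability at a fixed node $l\in L_0$. In each execution the faulty agents replay the behaviour they would have in the other execution (a well-defined pair of executions, since $\mathcal{A}$ is deterministic): the $F_0$-agents (faulty in execution~1) reproduce their execution~2 behaviour, the $F_b$-agents (faulty in execution~2) reproduce their execution~1 behaviour. Because all information entering $L_0$ from the rest of the graph passes through $F_0$ or through $B\subseteq F_b$, and the two profiles agree on all remaining non-faulty agents, every agent of $L_0$ — in particular $l$ — has an identical local input and an identical message history in the two executions, hence produces the same output point. That point must equal $x^{*}$ by correctness of $\mathcal{A}$ in execution~1 and $y$ by correctness in execution~2, contradicting $x^{*}\ne y$ and therefore the assumed $f$-resilience of $\mathcal{A}$.

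The step I expect to be the main obstacle is the construction of $\{Q_i^{a}\}$ and $\{Q_i^{b}\}$ in the second paragraph: both must obey Assumptions~\ref{assum:singleton-minimum} and~\ref{assum:minimum-set-intersection} (note that Assumption~\ref{assum:minimum-set-intersection} places a common minimiser in every sub-aggregate, so the two profiles must use \emph{distinct} global common minimisers, and the two relevant non-faulty aggregates must be forced down to disjoint singletons), they must agree on $\V\setminus(F_0\cup F_b)$, and the faulty agents must be able to carry out the cross-simulation with matching messages crossing the boundary of $L_0$ — which is exactly where the source-component hypothesis on $L_0$ enters. A secondary point needing care is the case $\mnorm{L_0}>1$: one must check that the boundary $B$ of $L_0$ inside $R_0$ still fits in a fault set of size at most $f$, or else replace $(F_0,G_{F_0},L_0)$ by another witness of the failure of Property D for which it does.
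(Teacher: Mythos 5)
Your overall route matches the paper's: argue by contradiction, use Lemma~\ref{lemma:redundancy-fg-equal} (with Assumption~\ref{assum:minimum-set-intersection}) to turn the failure of Property E into a failure of Property D for the minimum sets, use Assumption~\ref{assum:singleton-minimum} to force the optimization output to be the unique point of the non-faulty intersection, and then run the indistinguishability argument of Theorem~\ref{thm:necessity-fg} on the resulting set-intersection instance. The paper simply cites that argument; you attempt to instantiate it explicitly, and it is in that instantiation that two genuine problems appear.

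First, your two profiles cannot simultaneously satisfy all the constraints you impose. You require $y\notin X_i^{a}$ for every $i\notin L_0$, you require $\bigcap_{i\in\V\setminus F_b}X_i^{b}=\{y\}$ (hence $y\in X_i^{b}$ for \emph{every} agent outside $F_b$, including all of $R_0\setminus F_b$), and you require the profiles to coincide on $\V\setminus(F_0\cup F_b)\supseteq R_0\setminus F_b$. Unless $R_0\subseteq F_b$ — which would need $\mnorm{R_0}\le f$ and is not available — these are contradictory; your justification that ``outside $F_0\cup F_b$ only the agents of $L_0$ must hold $y$'' is false for profile $b$. The agreement requirement is also unnecessary: since every edge into $L_0$ originates in $L_0\cup F_0\cup B$, identical views for $L_0$ only need the profiles to agree on $L_0$ and the faulty boundary to replay consistent messages; the inputs of $R_0\setminus F_b$ may differ freely between the two worlds. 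Second, the global fault set $F_b\supseteq B=\bigcup_{l\in L_0}(N_l^-\cap R_0)$ need not have size at most $f$: the reduced-graph definition only bounds $\mnorm{N_l^-\cap R_0}\le f$ \emph{per node}, so $\mnorm{B}$ can be as large as $\mnorm{L_0}\cdot f$. You flag this, but the proposed repair (``find another witness of the failure of Property D'') is not justified and there is no reason such a witness with small boundary exists. The paper's Theorem~\ref{thm:necessity-fg} avoids both issues by arguing per agent: for each $l\in L_0$ the alternative scenario makes only $N_l^-\cap R_0$ (size at most $f$) faulty, and indistinguishability is claimed only for $l$'s view, not for a single globally consistent pair of executions covering all of $L_0$. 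To complete your proof you would need to adopt that per-agent formulation (together with the standard ``first agent of $L_0$ to decide differently'' bookkeeping), rather than the single global simulation you describe.
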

\begin{proof}
    The proof is by contradiction. On any given communication graph $G$, suppose Property E is not satisfied, and an $f$-resilient optimization algorithm exists. By Lemma~\ref{lemma:redundancy-fg-equal}, with Assumption~\ref{assum:minimum-set-intersection}, Property D is also not satisfied for the minimum set of each cost function.

    Since an $f$-resilient optimization algorithm exists, the algorithm is guaranteed to output a point in the minimum of the aggregate cost functions of all non-faulty agents in any execution, so long as the number of faulty agents is no more than $f$. By Assumption~\ref{assum:singleton-minimum}, the same algorithm is guaranteed to find the intersection of the minimum point of all non-faulty agents. In other words, an $f$-resilient Byzantine set intersection algorithm exists on $G$ so long as the intersection of non-faulty agents is a singleton, even if Property D is not satisfied.

    Following the same argument in the proof of Theorem~\ref{thm:necessity-fg}, Property D is also necessary for an $f$-resilient algorithm to exist where the intersection of the sets of all non-faulty agents is known to be a singleton. However, the $f$-resilient set intersection algorithm exists without Property D, which is a contradiction. Hence, the proof.
\end{proof}

\begin{theorem}[Sufficiency]
    Suppose Assumption~\ref{assum:minimum-set-intersection} is satisfied. 
    For any given communication graph $G$, an $f$-resilient Byzantine optimization algorithm exists if Property E is satisfied.
\end{theorem}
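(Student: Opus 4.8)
The plan is to prove this by reduction to the general-graph set intersection result, mirroring the earlier ``Sufficiency, optimization'' theorem but with Theorem~\ref{thm:sufficiency-fg} replacing Theorem~\ref{thm:sufficiency-set-intersection} and Lemma~\ref{lemma:redundancy-fg-equal} replacing the role played there by Lemma~\ref{lemma:alternative-condition-necessity}. First I would pass from the optimization redundancy to the set-intersection redundancy: by Lemma~\ref{lemma:redundancy-fg-equal}, Property E together with Assumption~\ref{assum:minimum-set-intersection} implies that Property D, i.e. $(f,G)$-redundancy for set intersection, holds for the family of minimum sets $X_i \triangleq \arg\min_x Q_i(x)$. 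Next I would record the identity linking the two problems: since $\bigcap_{i\in\V}X_i \subseteq \bigcap_{i\in\V\backslash F}X_i$ for every $F$ with $\mnorm{F}\leq f$, Assumption~\ref{assum:minimum-set-intersection} gives $\bigcap_{i\in\V\backslash F}X_i \neq \varnothing$, so Lemma~\ref{lemma:minimum-set-intersection} yields $\arg\min_x\sum_{i\in\V\backslash F}Q_i(x) = \bigcap_{i\in\V\backslash F}X_i$. Thus computing the intersection of the minimum sets of the non-faulty agents is exactly the same as finding the set of minimizers of their aggregate cost.

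Then I would invoke Theorem~\ref{thm:sufficiency-fg}: because Property D holds on $G$, the iterative set intersection algorithm constructed in that proof is $f$-resilient on $G$, so after finitely many iterations every non-faulty agent holds $\bigcap_{i\in\V\backslash F}X_i$. The composite optimization algorithm is then: each agent $i$ first computes its local minimum set $X_i = \arg\min_x Q_i(x)$; the agents run the set intersection algorithm of Theorem~\ref{thm:sufficiency-fg} on the inputs $\{X_i\}$; and each non-faulty agent outputs a canonical point of the resulting set, e.g. its geometric center. By the identity above, every non-faulty agent ends up with the same set $\arg\min_x\sum_{i\in\V\backslash F}Q_i(x)$, so the canonical-selection rule makes all of them output the same minimizer of the aggregate cost of the non-faulty agents; hence the algorithm is $f$-resilient per Definition~\ref{def:f-resilience}.

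The point to be careful about is the hand-off between the two formulations: one must verify that the hypotheses needed to apply Theorem~\ref{thm:sufficiency-fg} (only Property D) and to invoke Lemma~\ref{lemma:minimum-set-intersection} (nonemptiness of $\bigcap_{i\in\V\backslash F}X_i$ for all admissible $F$, and of $\bigcap_{i\in S}X_i$ for source components $S$) are all supplied by Property E and Assumption~\ref{assum:minimum-set-intersection} alone, and that the ``output'' demanded by Definition~\ref{def:f-resilience} can be produced deterministically and consistently at every non-faulty agent from the set it computes --- which is precisely why the geometric-center step (or any fixed canonical selection from the intersection) is included, exactly as in the $2f$-redundancy version. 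No new combinatorial or graph-theoretic argument is needed beyond what Theorem~\ref{thm:sufficiency-fg} already delivers; the whole content is the translation argument.
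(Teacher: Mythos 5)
Your proposal is correct and follows essentially the same reduction as the paper: use Lemma~\ref{lemma:redundancy-fg-equal} to pass from Property E to Property D, invoke Theorem~\ref{thm:sufficiency-fg} to obtain the set intersection of the non-faulty agents' minimum sets, identify that intersection with $\arg\min_x\sum_{i\in\V\backslash F}Q_i(x)$ via Lemma~\ref{lemma:minimum-set-intersection}, and output the geometric center. Your explicit check that Assumption~\ref{assum:minimum-set-intersection} supplies the nonemptiness hypotheses is a welcome bit of extra care, but it does not change the argument.
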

\begin{proof}
    The proof is by reduction. Since Property E is satisfied, with Assumption~\ref{assum:minimum-set-intersection} and Lemma~\ref{lemma:minimum-set-intersection}, the output of an $f$-resilient optimization algorithm is a point in $\bigcap_{i\in V\backslash F}\arg\min_xQ_i(x)$, where $F$ is the set of all faulty agents in an execution.
    
    Since Property E is satisfied, with Assumption~\ref{assum:minimum-set-intersection}, Property D is also satisfied for the minimum sets of the cost functions. By Theorem~\ref{thm:sufficiency-fg}, there exists a deterministic $f$-resilient set intersection algorithm. Therefore, consider the following algorithm:
    \begin{enumerate}[label=\textbf{Step \arabic*}]
        \item Each agent $i$ computes the minimum set $X_i$ of its local cost function $Q_i(x)$.
        \item The system runs an $f$-resilient set intersection algorithm to find the set intersection $\bigcap_{i\in\V\backslash F}X_i$ among all non-faulty agents.
        \item Each agent $i$ output the geometric center of $\bigcap_{i\in\V\backslash F}X_i$.
    \end{enumerate}
    By designating the output to be the geometric center, each non-faulty agent agrees on the same point in $\bigcap_{i\in\V\backslash F}\arg\min_xQ_i(x)$. Therefore, the proposed algorithm is $f$-resilient for Byzantine optimization.
\end{proof}
\section{Summary}

In this report, we discussed a Byzantine distributed set intersection problem. We demonstrated the relationship between this problem and the Byzantine distributed optimization problem, and propose a $2f$-redundancy property following $2f$-redundancy in Byzantine optimization. We provide the necessary and sufficient condition on the communication graph of a distributed system if $2f$-redundancy is satisfied, and then provide an equivalent condition. We then extend the problem to arbitrary communication graphs and provide a necessary condition on any given graph such that an algorithm exists. Lastly, we present the necessary and sufficient conditions for Byzantine optimization problems inspired by previous discussions on Byzantine set intersection problems. The results we provide are for synchronous and asynchronous systems both.

It is worth noting that the formulation of set intersection problems in our paper can also be viewed as a special type of consensus. Determining if each value $y$ is in the intersection of sets of all non-faulty agents is equivalent to reaching a consensus on the product of the values $\mathds{1}_j(y)$'s of each non-faulty agent $j$, where the function is defined as follows:
\begin{align}
    \mathds{1}_j(y) = \begin{cases}
        1, \qquad \textrm{if $y\in X_j$}, \\
        0, \qquad \textrm{if $y\notin X_j$},
    \end{cases}
    \label{eqn:1-function-x}
\end{align}
similar to what we have in \eqref{eqn:1-function}. It is also worth noting that determining if $y$ is in the union of the sets of all non-faulty agents is equivalent to reaching a consensus on the product of $1-\mathds{1}_j(y)$ of each non-faulty agent $j$.




\bibliographystyle{plainnat}
\bibliography{bib}

\end{document}